\definecolor{MyDarkBlue}{rgb}{0.15,0.25,0.85}
\RenewDocumentCommand{\textsection}{}
 {
  \str_if_eq:eeTF { \use:c { f@family } } { \familydefault }
   {
    {\fontfamily{qpl}\selectfont\latextextsection}
   }
   {\latextextsection}
 }
\let\OLDthebibliography\thebibliography
\renewcommand\thebibliography[1]{
  \OLDthebibliography{#1}
  \setlength{\parskip}{0pt}
  \setlength{\itemsep}{0pt plus 0.3ex}
}
\tikzset{>=stealth}
\theoremstyle{definition}
\newtheorem{definition}{Definition}
\theoremstyle{plain}
\newtheorem{proposition}[definition]{Proposition}
\theoremstyle{definition}
\theoremstyle{remark}
\newtheorem{remark}[definition]{Remark}
\renewcommand\delta\deltait
\newcommand\LV{V}
\newcommand\email[1]{\href{mailto:#1}{\nolinkurl{#1}}}
\renewcommand\Delta\Deltaup
\renewcommand\pi\piup
\renewcommand\Gamma\Gammaup
\renewcommand\Omega\Omegaup
\newcommand{\be}{\begin{equation}}
\newcommand{\ee}{\end{equation}}
\title{Full $S$-Matrices and Witten Diagrams\\ with Relative $L_\infty$-Algebras}
\author{Luigi Alfonsi\,\orcidlink{0000-0001-5231-2354}\,\footnote{Department of Physics, Astronomy and Mathematics, University of Hertfordshire, Hatfield \textsc{al10 9ab}, United Kingdom}~,~~Leron~Borsten\,\orcidlink{0000-0001-9008-7725}\,\footnotemark[1]~,~~Hyungrok~Kim\,\orcidlink{0000-0001-7909-4510}\,\footnotemark[1]~,\\ Martin~Wolf\,\orcidlink{0009-0002-8192-3124}\,\footnote{School of Mathematics and Physics, University of Surrey, Guildford \textsc{gu2 7xh}, United Kingdom}~,~~and Charles A.~S.~Young\,\orcidlink{0000-0002-7490-1122}\,\footnotemark[1]~~\footnote{E-mail addresses: \email{l.alfonsi@herts.ac.uk}, \email{l.borsten@herts.ac.uk}, \email{h.kim2@herts.ac.uk}, \email{m.wolf@surrey.ac.uk}, \email{c.young8@herts.ac.uk}}}
\begin{document}

    \maketitle

        \begin{abstract}
            The $L_\infty$-algebra approach to scattering amplitudes elegantly describes the nontrivial part of the $S$-matrix but fails to take into account the trivial part. We argue that the trivial contribution to the $S$-matrix should be accounted for by another, complementary $L_\infty$-algebra, such that a perturbative field theory is described by a cyclic relative $L_\infty$-algebra. We further demonstrate that this construction reproduces Witten diagrams that arise in AdS/CFT including, in particular, the trivial Witten diagrams corresponding to CFT two-point functions. We also discuss Chern--Simons theory and Yang--Mills theory on manifolds with boundaries using this approach.
        \end{abstract}

    \newpage

    \tableofcontents

    \bigskip
    \bigskip
    \hrule
    \bigskip
    \bigskip

    \section{Introduction and results}\label{sec:intro}
    
    While Lagrangians determine tree-level scattering amplitudes, it is well known that sometimes different Lagrangians give equivalent scattering amplitudes. Put differently, Lagrangians carry redundant information. The on-shell--scattering--amplitudes programme sheds this redundancy by constructing the  amplitudes directly, often removing the necessity of a Lagrangian altogether. This has led to both powerful computational tools and many new insights (see~\cite{Travaglini:2022uwo} and references therein for reviews). 

    A complementary approach is to identify a natural notion of equivalence between Lagrangians that encode the same physics. The appropriate equivalence relation, adopted here, amongst Lagrangians is given in terms of quasi-isomorphisms between the cyclic $L_\infty$-algebras governing physical theories.

    The starting point of this picture is the observation that field theories described by actions correspond to cyclic $L_\infty$-algebras. In a nutshell, a cyclic $L_\infty$-algebra\footnote{For notational clarity, we will write $(\LV,\{\mu_n\}_{n\in\mathbb{N}},\langle-,-\rangle_\LV)$ as $(\LV,\mu_n,\langle-,-\rangle_\LV)$.}, $\mathfrak{L}\coloneqq(\LV,\{\mu_n\}_{n\in\mathbb{N}},\langle-,-\rangle_\LV)$, consists of a graded vector space $\LV=\bigoplus_{k\in\mathbb{Z}}\LV^k$ equipped with higher $n$-ary brackets, $\mu_n:\LV\times\cdots\times\LV\rightarrow\LV$, and cyclic inner-product (or structure)  $\langle-,-\rangle_\LV:\LV\times\LV\rightarrow\mathbb{R}$, generalising the binary Lie bracket and Cartan--Killing form of Lie algebras. The cyclic structure and higher brackets canonically yield a homotopy Maurer--Cartan action, which is precisely the classical Batalin--Vilkovisky (BV) action of the associated theory.  Roughly speaking, the cyclic structure $\langle-,-\rangle_\LV$ is an inner-product on the space of (anti)fields that yields the action\footnote{Typically integration over the spacetime manifold together with an invariant inner-product on the space of internal symmetry representations carried by the fields.}, the unary bracket $\mu_1$ encodes the kinetic term of the action, given by $\langle\phi,\mu_1(\phi)\rangle_\LV$, while the higher brackets $\mu_n$ encode the $(n+1)$-point interaction terms, given by $\langle\phi,\mu_n(\phi,\ldots,\phi)\rangle_\LV$. 

    Cast in this language, semi-classical equivalence between physical theories then amounts to quasi-isomorphisms of cyclic $L_\infty$-algebras. This language is fruitful in that the previous sentence is but the very beginning of a large and detailed dictionary between physics and homotopy algebras outlined in \Cref{table:physics-homotopy-dictionary}. In particular, every $L_\infty$-algebra $\mathfrak{L}$ is equivalent (i.e.~quasi-isomorphic) to a unique\footnote{up to $L_\infty$-isomorphisms} $L_\infty$-algebra $\mathfrak{L}^\circ$ which has vanishing $\overset\circ\mu_1$. A representative of this equivalence class  is called a \emph{minimal model}. When the cyclic structure is preserved by the quasi-isomorphism, the higher brackets of the minimal model encode precisely the non-trivial part of the corresponding connected $S$-matrix. From this perspective, off-shell Lagrangians and scattering amplitudes are unified as quasi-isomorphic $L_\infty$-algebras. This makes it clear that quasi-isomorphisms are the correct notion of equivalence; all quasi-isomorphic Lagrangians are quasi-isomorphic to the same minimal model encoding the unique $S$-matrix.

    However, there is an important gap in the above dictionary. The cyclic $L_\infty$-algebra only contains information about the nontrivial part of the $S$-matrix; information about the identity part of the $S$-matrix must be supplied separately. Lacking the identity component, one cannot simply exponentiate the connected diagrams --- after all, the minimal model, by definition, has no $\mu_1$ and, hence, no `propagator' for the identity part of the $S$-matrix. For Minkowski space-time, this is not a serious loss, of course, since it is the literal identity. However, in the case of perturbation theory on nontrivial spaces such as anti-de~Sitter space, this is no longer true and the `trivial' Witten diagrams encode nontrivial information such as the CFT two-point function.

    This observation is closely related to more technical aspects of the homotopy Maurer--Cartan action in the presence of boundaries. Firstly, the putative cyclic structure, although well-defined in the absence of a boundary, may fail to be cyclic due to boundary contributions that appear when using integration by parts to establish the required cyclic identities. For the same reason, the canonical homotopy Maurer--Cartan action may differ from the physically preferred  action, even if the cyclic structure is well-defined. For example, the canonical homotopy-Maurer--Cartan scalar-field-theory kinetic term is
    \begin{equation}
        S_\text{hMC}=\tfrac12\langle\phi,\mu_1(\phi)\rangle+\cdots=\tfrac12\int_M\operatorname{vol}_M\phi\Delta\phi +\cdots,
    \end{equation} 
    which differs from the physically relevant action functional $\frac12\int_M\operatorname{vol}_M(\partial\phi)^2$ by a boundary term coming from the total derivative; indeed, this term is precisely that which appears in the computation of the CFT two-point function. Generically, for technical reasons, in the $L_\infty$-algebraic approach to perturbation theory~\cite{Nutzi:2018vkl,Macrelli:2019afx,Arvanitakis:2019ald,Jurco:2019yfd,Lopez-Arcos:2019hvg,Jurco:2020yyu,Saemann:2020oyz,Gomez:2020vat,Bonezzi:2023xhn}, the cyclic structure is defined piecewise for the on-shell, $\mu_1(\phi)=0$, and off-shell, $\mu_1(\phi)\not=0$, components. So, if we  decompose $\phi$ it into off-shell and on-shell components $\phi\eqqcolon F+f$ respectively, the linearised homotopy Maurer--Cartan action is
    \begin{equation}
        S_\text{hMC}=\tfrac12\langle\phi,\mu_1(\phi)\rangle=\tfrac12\langle F,\mu_1(F)\rangle,
    \end{equation}
    so that there is no contribution from the on-shell part $f$. However, as we have seen in the scalar field example, the on-shell part can  contribute to the physically relevant action functional.

    All these observations are uniformly addressed by transitioning from a cyclic $L_\infty$-algebra to a \emph{relative} cyclic $L_\infty$-algebra that induces a canonical \emph{relative} homotopy Maurer--Cartan action, which form the central part of this work. A relative cyclic $L_\infty$-algebra is simply a pair of cyclic $L_\infty$-algebras, $\mathfrak{L}$ and $\mathfrak{L}_\partial$, with a cyclic morphism $\pi:\mathfrak{L}\rightarrow\mathfrak{L}_\partial$ relating them. This is the homotopy relaxation, via Koszul duality, of relative metric Lie algebras, i.e.~pairs of homomorphic Lie algebras $\pi:\mathfrak{g}\rightarrow\mathfrak{g}_\partial$ equipped with inner-products preserved by $\pi$. The key idea is to supplement the original cyclic $L_\infty$-algebra $\mathfrak{L}$, the `bulk', with another cyclic $L_\infty$-algebra $\mathfrak{L}_\partial$, the `boundary'. The raison d'\^etre of the boundary $L_\infty$-algebra is to simultaneously correct the failure of cyclicity while introducing the physically relevant boundary terms that are not present in the canonical homotopy Maurer--Cartan action. In particular, the bulk-to-boundary morphism $\pi$ generates the boundary terms in the relative homotopy Maurer--Cartan action. Every relative $L_\infty$-algebra $\pi:\mathfrak{L}\rightarrow\mathfrak{L}_\partial$ is equivalent to a minimal relative $L_\infty$-algebra $\overset\circ\pi:\mathfrak{L}^\circ\rightarrow\mathfrak{L}^\circ_\partial$. As before, the higher brackets, $\overset\circ\mu_k$, of the minimal model encode the non-trivial connected $S$-matrix, while the `trivial' part is recovered  from boundary contributions to the relative minimal model given by $\overset\circ\pi$. In conclusion, we thus arrive at an abstract structure, relative cyclic $L_\infty$-algebras, that encodes the physics associated to (asymptotic) boundaries, from $S$-matrices to Witten diagrams, uniformly.  
    
    \begin{table}[h]
        \vspace{15pt}
        \begin{center}
            \resizebox{.95\hsize}{!}{
                \begin{tabular}{cc}
                    \toprule
                    Homotopy Algebras & Scattering Amplitudes
                    \\
                    \midrule
                    homotopy Maurer--Cartan action for $L_\infty$-algebras~\cite{Zeitlin:2007yf,Zeitlin:2008cc,Hohm:2017pnh,Jurco:2018sby,Jurco:2019bvp} & perturbative field theory actions without boundary terms 
                    \\
                    minimal model of $L_\infty$-algebras~\cite{Nutzi:2018vkl,Macrelli:2019afx,Arvanitakis:2019ald,Jurco:2019yfd,Lopez-Arcos:2019hvg,Arvanitakis:2020rrk,Jurco:2020yyu,Saemann:2020oyz,Gomez:2020vat,Arvanitakis:2021ecw,Bonezzi:2023xhn,Borsten:2024dvq} & nontrivial part of the connected tree-level $S$-matrix 
                    \\
                    minimal model of quantum $L_\infty$-algebra~\cite{Pulmann:2016aa,Doubek:2017naz,Jurco:2019yfd,Saemann:2020oyz} & nontrivial part of the connected loop-level $S$-matrix
                    \\
                    further structure on $L_\infty$-algebras (e.g.~$BV^\blacksquare_\infty$-algebra) \cite{Reiterer:2019dys,Borsten:2021hua,Borsten:2022vtg,Borsten:2022ouu,Borsten:2023ned,Borsten:2023reb,Borsten:2023paw,Bonezzi:2023pox,Bonezzi:2023pox,Bonezzi:2024dlv,Armstrong-Williams:2024icu,Bonezzi:2024fhd} & further symmetries (e.g.~colour--kinematics duality)
                    \\
                    \midrule
                    homotopy Maurer--Cartan action for relative $L_\infty$-algebras & perturbative field theory actions including boundary terms 
                    \\
                    minimal model of relative $L_\infty$-algebras & connected tree-level $S$-matrix including trivial part
                    \\
                    \bottomrule
                \end{tabular}
            }
        \end{center}
        \caption{Correspondence between homotopy algebras and quantum field theory physics. This paper focuses on the bottom part of the table.}\label{table:physics-homotopy-dictionary}
    \end{table}

    \paragraph{Related works.}
    The work~\cite{chiaffrino2023holography} is similar in spirit to our discussion of $L_\infty$-algebras for theories with boundary in \Cref{sec:motivating-example} and  holography in \Cref{ssec:ads} and, in part, inspired the present contribution. However, it differs substantially in the technical approach. It would be interesting to understand how these (at least superficially) distinct perspectives are related. 
    
    In particular, rather than recovering the boundary contributions to the BV-action via the pullback from the boundary BFV theory, as would follow from the standard BV--BFV formalism~\cite{Cattaneo_2014}, the bulk and boundary are treated within a single $L_\infty$-algebra in \cite{chiaffrino2023holography}, with modified antifields, products and cyclic structure. In particular, the space  of antifields is enlarged  to include boundary antifields and the differential $\mu_1$ and cyclic structure are adjusted to reproduce the desired bulk action. It is then shown to be possible to homotopy transfer to a boundary theory, which corresponds to a certain \emph{non-minimal} quasi-isomorphic $L_\infty$-algebra. The corresponding homotopy Maurer--Cartan action then computes the boundary action for on-shell field configurations, precisely as one would like for holography. However, the resulting boundary theory is \emph{not} the minimal model of the modified $L_\infty$-algebra, since it has non-trivial differential, see \cite[(3.38)]{chiaffrino2023holography}. It is possible, then, that the minimal model itself is actually trivial with  no physical fields\footnote{For example, for scalar field theory on an oriented compact Riemannian manifold $M$ with boundary $\partial M$, then~\cite[(3.31)]{chiaffrino2023holography} has as the cochain complex 
    \begin{equation}
        \begin{tikzcd}[ampersand replacement=\&,column sep=40pt]
            0\arrow[r]\& \mathcal C^\infty(M)\arrow[r,"{(\Delta-m^2,\partial_N)}"]\&\mathcal C^\infty(M)\oplus\mathcal C^\infty(\partial M)\arrow[r]\& 0,
        \end{tikzcd}
    \end{equation}
    where $\Delta$ is the Beltrami Laplacian, and $\partial_N$ the normal derivative on $\partial M$. The space of physical fields in the minimal model is given by $\ker(\Delta-m^2,\partial_N)$. Since the  Laplacian on a compact Riemannian manifold with boundary $\partial M$ has a non-positive point spectrum for Neumann boundary conditions ($\partial_N\phi=0$), see e.g.~\cite{berard2006spectral}, for $m\not=0$ the cohomology containing the physical states is trivial.} and thus vanishing boundary action. Correspondingly, it is not possible to encode the identity component of the $S$-matrix, or the two-point function of the boundary CFT in a holographic context, in terms of the minimal model. 

    By contrast, we prefer to regard invariance under quasi-isomorphisms as a fundamental guiding principle: quasi-isomorphic $L_\infty$-algebras should be physically equivalent. The physics, e.g.~the $S$-matrix or Witten diagrams, is captured by the relative minimal model, which is unique up to $L_\infty$-isomorphisms, with the boundary  data  encoded in the minimal model morphism $\overset\circ\pi$.

    Our work can be related to the BV--BFV formalism as developed in~\cite{Cattaneo_2014,Mnev_2019,Rejzner_2021}, but the discussion of the minimal model and scattering amplitudes thereof is new to the best of our knowledge. Note that our discussion of holography focuses on the perturbative sector of Witten diagrams and, as such, differs from (and is complementary to) the holography-related discussion in~\cite{Mnev_2019}, which discusses the nonperturbative aspects of AdS\textsubscript3/CFT\textsubscript2 in particular.
    
    The work~\cite{Pulmann:2020omk} discusses the perturbative aspects of AdS\textsubscript3/CFT\textsubscript2 using the Batalin--Vilkovisky formalism; our discussion is related but complementary in that we connect the BV formulation of holography to $L_\infty$-algebras.

    The programme by~\cite{Costello_2021,Paquette:2021cij} to formulate defects and holography using Koszul duality shares many keywords with the current work --- Koszul duality, for example, underlies the definition of $L_\infty$-algebras --- but otherwise differs very much technically; homotopy algebras do not feature heavily in that programme. Nevertheless, it would be interesting to see if the commonalities in concepts could be extended to some sort of concrete connection.

    \section{Relative \texorpdfstring{$L_\infty$}{L∞}-algebras}

    \subsection{A motivating example}\label{sec:motivating-example}

    Let $M$ be an oriented compact Riemannian manifold with metric $g$ and boundary $\partial M$. Consider a scalar field $\phi$ of mass $m$ on $M$ governed by the action
    \begin{equation}\label{eq:spiel-action}
        S\coloneqq-\int_M\operatorname{vol}_M\Big\{\tfrac12(\partial\phi)^2+\tfrac12m^2\phi^2+\tfrac1{3!}\lambda\phi^3\Big\},
    \end{equation}
    where $\operatorname{vol}_M$ is the volume form associated with $g$ and $\lambda$ is the cubic self-interaction coupling constant. The variation of~\eqref{eq:spiel-action} with the Dirichlet boundary condition $\delta\phi|_{\partial M}=0$ yields the desired equation of motion 
    \begin{equation}\label{eq:scalareom}
        (\Delta-m^2)\phi=\lambda\phi^2.
    \end{equation}
    
    To encode this theory in terms of an $L_\infty$-algebra, as loosely described in \Cref{sec:intro}, one first integrates by parts in~\eqref{eq:spiel-action} to arrive at
    \begin{equation}\label{eq:spiel-action-integrated-by-parts} 
        S=\int_M\operatorname{vol}_M\Big\{\tfrac12\phi(\Delta-m^2)\phi-\tfrac1{3!}\lambda\phi^3\Big\}-\tfrac12\int_{\partial M}\operatorname{vol}_{\partial M}\phi\partial_N\phi.
    \end{equation}
    Here, $\Delta$ is the Beltrami Laplacian associated with $g$ and $\partial_N\colon\mathcal C^\infty(M)\rightarrow\mathcal C^\infty(\partial M)$ the normal derivative to $\partial M$ with respect to $g$.\footnote{More precisely, one extends the normal vector field $\partial_N$ on $\partial M\hookrightarrow M$ to some vector field $\tilde V_N$ on $M$ in an arbitrary but smooth fashion. Then, $\partial_N\phi\coloneqq\tilde V_N\phi|_{\partial M}$. Note that this does not depend on the choice of the extension $\tilde V_N$ of $\partial_N$.} Note that the boundary term $\int_{\partial M}\operatorname{vol}_{\partial M}\phi\partial_N\phi$ is precisely what is needed to reproduce the equation of motion \eqref{eq:scalareom} via the variation of \eqref{eq:spiel-action-integrated-by-parts} whilst imposing only the Dirichlet boundary condition on the variation, $\delta\phi|_{\partial M}=0$. Whilst essentially trivial, this toy example captures the generic situation that an action with higher-than-first-order derivatives of a field requires a boundary correction, the most famous instance of which is the Gibbons--Hawking--York term.    
    
    In the present context, the key observation is that the \emph{bulk} term, that is, the first summand in~\eqref{eq:spiel-action-integrated-by-parts}, can be recast as a homotopy Maurer--Cartan action 
    \begin{equation}\label{eq:hMC-action}
        S_\mathrm{hMC}\coloneqq\sum_{n\geq 2}\frac{1}{n!}\langle\phi,\mu_{n-1}(\phi,\dots,\phi)\rangle
    \end{equation} 
    for the $L_\infty$-algebra which has
    \begin{subequations}\label{eq:spiel-bulk-algebra}  
        \begin{equation}
            \LV\coloneqq\underbrace{\mathcal C^\infty(M)}_{\eqqcolon\LV^1}\oplus\underbrace{\mathcal C^\infty(M)}_{\eqqcolon\LV^2} 
        \end{equation} 
        as its underlying graded vector space, where the fields $\phi$ live in $\LV^1$ and the corresponding antifields $\phi^+$ belong to $\LV^2$. The non-trivial $L_\infty$-products\footnote{For simplicity we consider only $\phi^3$ interactions given by $\mu_2$, so that the $L_\infty$-algebra is merely a graded differential Lie algebra, but arbitrary interactions may be included via higher products $\mu_n(\phi_1,\phi_2,\ldots,\phi_n)\coloneqq-\lambda_n\phi_1\phi_2\cdots\phi_n$ of degree $n-2$.}
        \begin{equation}
            \mu_1(\phi)\coloneqq(\Delta-m^2)\phi
            \quad\mbox{and}\quad
            \mu_2(\phi_1,\phi_2)\coloneqq-\lambda\phi_1\phi_2,
        \end{equation}
        of degrees $1$ and $0$, respectively, and the non-degenerate bilinear form
        \begin{equation}\label{eq:spiel-bulk-pairing}
            \langle\phi,\phi'^+\rangle_\LV\coloneqq\int_M\operatorname{vol}_M\phi\phi'^+
        \end{equation} 
    \end{subequations}
    of degree $-3$, which pairs fields with antifields. The latter is not cyclic for $\mu_1$ because of the presence of the boundary $\partial M$, cf.~\cite{chiaffrino2023holography}.

    Can we describe the \emph{boundary} term, that is, the second term in~\eqref{eq:spiel-action-integrated-by-parts}, in a similar language? Naively, we have the boundary-related structures of boundary fields
    \begin{subequations}\label{eq:spiel-naive-boundary-algebra}
        \begin{equation}
            \LV_\partial^\text{naive}\coloneqq\underbrace{\mathcal C^\infty(\partial M)}_{\eqqcolon(\LV_\partial^\text{naive})^1}
        \end{equation}
        and the bulk-to-boundary maps 
        \begin{equation}
            \begin{array}{cccccc}
                \iota^*\,:\, & \LV & \rightarrow & \LV^\text{naive}_\partial,
                \\
                & 
                \begin{pmatrix}
                    \phi\\ \phi^+
                \end{pmatrix}
                & \mapsto & \phi|_{\partial M},
            \end{array}
            \qquad 
            \begin{array}{cccccc}
                \partial_N\,:\, & \LV & \rightarrow & \LV^\text{naive}_\partial,
                \\
                & 
                \begin{pmatrix}
                    \phi\\ \phi^+
                \end{pmatrix}
                & \mapsto & \partial_N\phi,
            \end{array}
        \end{equation}
        given by the pull-back of the natural inclusion $\iota:\LV^\text{naive}_\partial\rightarrow\LV$ and the normal derivative. 
        
        There is also the natural non-degenerate bilinear form on $\LV^\text{naive}_\partial$
        \begin{equation}
            \langle\alpha,\alpha'\rangle_{\LV^\text{naive}_\partial}\coloneqq\int_{\partial M}\operatorname{vol}_{\partial M}\alpha\alpha'
        \end{equation}
    \end{subequations}
    of degree $-2$. With these additional boundary structures  it is straightforward to recast the entire action~\eqref{eq:spiel-action-integrated-by-parts} as
    \begin{equation}
        S=\sum_{n\geq1}\tfrac{1}{n!}\langle\phi,\mu_n(\phi, \ldots,\phi)\rangle_\LV-\tfrac12\big\langle\phi|_{\partial M},\partial_N\phi\big\rangle_{\LV^\text{naive}_\partial}.
    \end{equation}
    The addition of boundary functions $C^\infty(\partial M)$ appears in $L_\infty$-algebra of~\cite{chiaffrino2023holography} for the same reason. However,  in that case it is included in the degree $2$ (antifield) component of the original $L_\infty$-algebra (and is not doubled, as we shall momentarily describe), while we will place it in a relative  boundary $L_\infty$-algebra. Of course, since $C^\infty(\partial M)$ is added as a direct sum in~\cite{chiaffrino2023holography} this is superficially identical (but the degrees are different; here $C^\infty(\partial M)$ is  the space of boundary \emph{fields}).  

    The presence of $\LV_\partial^\text{naive}$ and $\langle-,-\rangle_{\LV^\text{naive}_\partial}$ is reminiscent of the BV--BFV formalism~\cite{Cattaneo_2014,Mnev_2019,Rejzner_2021}.
    The BV--BFV formalism suggests, however, that the boundary should be described by a phase space, not a configuration space --- that is, it should be double the size --- and that
    the two maps $\iota^*,\partial_N\colon\LV\rightarrow\LV_\partial^\text{naive}$ should be bundled up into a single map into a phase space.
    That is, the equations of motion following from the action~\eqref{eq:spiel-action-integrated-by-parts} are of second order so that rather considering the boundary $\partial M$ we should be considering the first-order infinitesimal thickening along the normal bundle. Consequently, we take
    \begin{equation}
        \LV_\partial\coloneqq\underbrace{\mathcal C^\infty(\partial M)\oplus\mathcal C^\infty(\partial M)}_{\eqqcolon\LV_\partial^1}
    \end{equation}
    instead of~\eqref{eq:spiel-naive-boundary-algebra}, and combine $\iota^*$ and $\partial_N$ into a single linear map of degree $0$,
    \begin{equation}
        \begin{array}{ccccc}
            \pi\,:\, & \LV & \rightarrow & \LV_\partial,
            \\
            &
            \begin{pmatrix}
                \phi\\ \phi^+
            \end{pmatrix} 
            & \mapsto &
            \begin{pmatrix}
                \phi|_{\partial M}\\ -\partial_N\phi
            \end{pmatrix}.
        \end{array}
    \end{equation}
    Define a degree $-2$ bilinear form, $\langle-,-\rangle_\partial:\LV_\partial\times\LV_\partial\rightarrow\LV_\partial$, by
    \begin{equation}\label{eq:spiel-bilinear-form}
        \left\langle
            \begin{pmatrix}
                \alpha\\ \beta
            \end{pmatrix},
            \begin{pmatrix}
                \alpha'\\ \beta'
            \end{pmatrix}
        \right\rangle_{\LV_\partial}\coloneqq\int_{\partial M}\operatorname{vol}_{\partial M}\,\alpha\beta'.
    \end{equation}
    It then follows that  
    \begin{equation}\label{eq:spiel-boundary-expr}
        \tfrac12\langle\pi(\phi),\pi(\phi)\rangle_{V_\partial} =-\tfrac12\int_{\partial M}\operatorname{vol}_{\partial M}\phi \partial_N\phi
    \end{equation}
    reproduces the boundary term in the action~\eqref{eq:spiel-action-integrated-by-parts}.  
     
    Note that the term $\int_{\partial M}\operatorname{vol}_{\partial M}\,\alpha\beta'$ is also included in the modified cyclic structure of~\cite{chiaffrino2023holography} for precisely the same reason. However, its definition and interpretation differs from~\eqref{eq:spiel-bilinear-form}. Specifically, in~\cite{chiaffrino2023holography} it is the space of antifields $V^2$ of the $L_\infty$-algebra \eqref{eq:spiel-bulk-algebra} that is enlarged to include a \emph{single} copy of $\mathcal C^\infty(\partial M)$, while in the present case we introduce a second boundary $L_\infty$-algebra with $\LV_\partial=\LV_\partial^{1}\coloneqq{\mathcal C^\infty(\partial M)\oplus\mathcal C^\infty(\partial M)}$ concentrated in the space of \emph{fields}.
    
    We note that the bilinear form~\eqref{eq:spiel-bilinear-form} is degenerate and, in addition, it has no symmetry properties. However, its graded symmetrisation and antisymmetrisation,
    \begin{equation}
        \begin{aligned}
            \langle x,x'\rangle_{\LV_\partial}^\mathrm{sym}\coloneqq\tfrac12\big(\langle x,x'\rangle_{\LV_\partial}+(-1)^{|x||x'|}\langle x',x\rangle_{\LV_\partial}\big),
            \\
            \langle x,x'\rangle_{\LV_\partial}^\mathrm{skew}\coloneqq\tfrac12\big(\langle x,x'\rangle_{\LV_\partial}-(-1)^{|x||x'|}\langle x',x\rangle_{\LV_\partial}\big),
        \end{aligned}
    \end{equation}
    are non-degenerate for all $x\coloneqq(\alpha,\beta)$ and $x'\coloneqq(\alpha',\beta')$,\footnote{Note that, since $\LV_\partial$ is concentrated in degree $1$, we always have $|x|=|x'|=1$.}  and each component separately  plays an important role.
    
    First note, setting $m=0$ for notational simplicity,
    \begin{equation}
    \begin{split}\label{eq:spiel-cyc-correction}
        \langle\phi,\mu_1(\phi')\rangle_\LV& = \tfrac12\int_M\operatorname{vol}_M\phi\Delta\phi'\\
        &=\tfrac12 \int_M\operatorname{vol}_M(\Delta\phi)\phi'+\tfrac12\int_{\partial M}  \phi|_{\partial M}\partial_N\phi'- \tfrac12\int_{\partial M}  \partial_N\phi \phi'|_{\partial M}\\
        &=\langle\phi',\mu_1(\phi)\rangle_\LV+\langle\pi(\phi),\pi(\phi')\rangle_{\LV_\partial}^\mathrm{sym},
        \end{split}
    \end{equation}
    so that  $\langle-,-\rangle_{\LV_\partial}^\mathrm{sym}$ corrects for the failure of the cyclicity of the bilinear form~\eqref{eq:spiel-bulk-pairing} for the differential $\mu_1$ of the bulk $L_\infty$-algebra~\eqref{eq:spiel-bulk-algebra}. 
    
    Note that on setting $\phi=\phi'$ the term $\langle\pi(\phi),\pi(\phi')\rangle_{\LV_\partial}^\mathrm{sym}$ appearing in~\eqref{eq:spiel-cyc-correction} vanishes identically. On the other hand, the graded antisymmetric  component yields
    \begin{equation}
        \langle\pi(\phi),\pi(\phi)\rangle_{\LV_\partial}^\mathrm{skew}=\langle\pi(\phi),\pi(\phi)\rangle_{\LV_\partial},
    \end{equation}
    so that we recover the boundary term in the action~\eqref{eq:spiel-action-integrated-by-parts}.  If one had merely used only $\langle-,-\rangle_{\LV_\partial}^\mathrm{skew}$ at the outset,  the boundary-term of the action would be recovered, but without any correction for cyclicity. 
    
    In summary, $\langle-,-\rangle_{\LV_\partial}^\mathrm{sym}$ corrects for the failure of the cyclicity while $\langle-,-\rangle_{\LV_\partial}^\mathrm{skew}$ corrects for the difference between the Dirichlet action~\eqref{eq:spiel-action} and the homotopy Maurer--Cartan action~\eqref{eq:hMC-action}. 

    It remains to determine the compatibility relation between $\pi$ and the bulk $L_\infty$-algebra products $\mu_1$ and $\mu_2$. It follows that
    \begin{equation}
        \pi\circ\mu_1=0
        \quad\mbox{and}\quad
        \pi\circ\mu_2=0
    \end{equation}
    because of degree reasons since $\pi$ is of degree $0$ and $\LV_\partial$ is concentrated in degree $1$. These identities are consistent with the expected functoriality of $\pi$, i.e.~it should be a chain map, $\pi\circ\mu_1=\mu^\partial_1\circ\pi$ and a Lie algebra homomorphism $\pi(\mu_2(x,x'))=\mu^\partial_2(\pi(x),\pi(x'))$ for all $x$ and $x'$.  

    \subsection{Basic definitions}\label{sec:basicDefinitions}

    In order to formalise the discussion in \Cref{sec:motivating-example} and to set the stage for our later discussion, we now present some basic definitions, motivating them by reference to the key features exposed in \Cref{sec:motivating-example}.

    \paragraph{\mathversion{bold}Cyclic relative $L_\infty$-algebras.}
    We start by recalling the notion of a relative Lie algebra.
    
    \begin{definition}\label[Definition]{def:relative-Lie-algebra}
        A \emph{relative Lie algebra} $(\LV,[-,-],\LV_\partial,[-,-]_\partial,\pi)$ is a pair of Lie algebras $(\LV,[-,-])$ and $(\LV_\partial,[-,-]_\partial)$ together with a morphism $\pi\colon(\LV,[-,-])\rightarrow(\LV_\partial,[-,-]_\partial)$.
        A \emph{morphism of relative Lie algebras} $(\phi,\phi_\partial)\colon(\LV,\LV_\partial)\rightarrow(\LV',\LV'_\partial)$ is a commutative square of Lie algebra homomorphisms
        \begin{equation}
            \begin{tikzcd}
                \LV \dar["\pi"] \rar["\phi"] & \LV' \dar["\pi'"] \\
                \LV_\partial \rar["\phi'"] & \LV'_\partial
            \end{tikzcd}
        \end{equation}
    \end{definition}

    \begin{remark}
        The class of relative Lie algebras forms a two-coloured operad (one colour for $\LV$, another colour for $\LV_\partial$). Following~\cite{markl2002homotopy,markl2001homotopy}, this coloured operad admits a minimal model, i.e.~the corresponding $\infty$-algebra using Koszul duality, which is the following notion.
    \end{remark}

    \begin{definition}\label{def:relative_Linfty}
        A \emph{relative $L_\infty$-algebra} $(\LV,\mu_n,\LV_\partial,\mu^\partial_n, \pi)$ is a pair of $L_\infty$-algebras $ (\LV,\mu_n)$, $(\LV_\partial,\mu^\partial_n)$ together with a morphism $\pi\colon(\LV,\mu_n)\rightarrow(\LV_\partial,\mu^\partial_n)$. Furthermore, a \emph{morphism of relative $L_\infty$-algebras} $(\phi,\phi_\partial)\colon(\LV,\LV_\partial)\rightarrow(\LV',\LV'_\partial)$ is a commutative square of $L_\infty$-algebra morphisms
        \begin{equation}
            \begin{tikzcd}
                \LV \dar["\pi"] \rar["\phi"] & \LV' \dar["\pi'"] \\
                \LV_\partial \rar["\phi'"] & \LV'_\partial
            \end{tikzcd}
        \end{equation}
    \end{definition}
    
    Note that a morphism $\pi$ of $L_\infty$-algebras consists of a family of $n$-linear maps $\pi_n:\LV\times\cdots\times\LV\rightarrow\LV$ of degree $1-n$. It is then clear from the definition that the scalar field theory described in \Cref{sec:motivating-example} yields a relative $L_\infty$-algebra with non-trivial brackets $\mu_2,\mu_1,\mu_1^\partial$ and morphism $\pi$ with only $\pi_1$ non-vanishing. In the context of theories on manifolds with boundary, the general case articulated in \Cref{def:relative_Linfty} allows for arbitrary bulk and boundary interactions, $\mu_n$ and $\mu^\partial_n$.\footnote{Boundary interactions will arise when there are derivative bulk interactions.}

    However, in order to formulate action principles including these interactions, we need the notion of a cyclic structure on a relative $L_\infty$-algebra.

    \begin{definition}\label{def:cyclic_structure}
        A \emph{cyclic structure} on a relative $L_\infty$-algebra 
        \begin{equation}
            (\LV,\mu_n)\xrightarrow\pi(\LV_\partial,\mu^\partial_n)
        \end{equation}
        consists of a non-degenerate graded-symmetric bilinear form
        \begin{subequations}
            \begin{equation}
                \langle-,-\rangle_\LV\colon\LV\times\LV\rightarrow\mathbb R
            \end{equation}
            of degree $-3$ and a bilinear form\footnote{We do neither assume that this bilinear form is non-degenerate nor has any symmetry properties.}
            \begin{equation}
                \langle-,-\rangle_{\LV_\partial}\colon\LV_\partial\times\LV_\partial\rightarrow\mathbb R
            \end{equation}
            of degree $-2$ such that for each $n$ the multilinear maps            \begin{equation}
                \begin{aligned}
                    \LV\times\cdots\times\LV & \rightarrow\mathbb R,
                    \\
                    (x_1,\ldots,x_n) & \mapsto[x_1,\ldots,x_n]_\LV\coloneqq\langle x_1,\mu_{n-1}(x_2,\ldots,x_n)\rangle_\LV
                    \\
                    & \kern4.5cm+\sum_{i+j=n}\langle\pi_i(x_1,\ldots,x_i),\pi_j(x_{i+1},\ldots,x_n)\rangle_{\LV_\partial}
                \end{aligned}
            \end{equation}
            and
            \begin{equation}
                \begin{aligned}
                    \LV_\partial\times\cdots\times\LV_\partial & \rightarrow\mathbb R,
                    \\
                    (x_1,\ldots,x_n) & \mapsto[[x_1,\ldots,x_n]]_{\LV_\partial}\coloneqq\langle x_1,\mu^\partial_{n-1}(x_2,\ldots,x_n)\rangle_{\LV_\partial}
                    \\
                    & \kern4.5cm+(-1)^{|x_1||\mu^\partial_{n-1}(x_2,\ldots,x_n)|}\langle\mu^\partial_{n-1}(x_2,\ldots,x_n),x_1\rangle_{\LV_\partial}
                \end{aligned}
            \end{equation}
            are non-degenerate and also cyclic, 
            \begin{equation}\label{eq:RelCyclic}
                \begin{aligned}
                    [x_1,\ldots,x_n]_\LV &= (-1)^{n-1+(n-1)(|x_1|+|x_n|)+|x_n|\sum_{i=1}^{n-1}|x_i|}[x_n,x_1,\ldots,x_{n-1}]_\LV,
                    \\
                    [[x_1,\ldots,x_n]]_{\LV_\partial} &= (-1)^{n-1+(n-1)(|x_1|+|x_n|)+|x_n|\sum_{i=1}^{n-1}|x_i|}[[x_n,x_1,\ldots,x_{n-1}]]_{\LV_\partial}.
                \end{aligned}
            \end{equation}
        \end{subequations}
    \end{definition}

    \noindent
    Note that the cyclicity will directly impose crossing (Bose) symmetry of scattering amplitudes. 

    \paragraph{Relative homotopy Maurer--Cartan action.}
    Consider a cyclic relative $L_\infty$-algebra
    \begin{equation}\label{eq:relLAlgMCAction}
        (\LV,\mu_n,\langle-,-\rangle_\LV)\xrightarrow\pi(\LV_\partial,\mu^\partial_n,\langle-,-\rangle_{\LV_\partial}).
    \end{equation}
    Motivated by our discussion in \Cref{sec:motivating-example}, we shall refer to $\mathfrak{L}=(\LV,\mu_n,\langle-,-\rangle_\LV)$ as the \emph{bulk} $L_\infty$-algebra and to $\mathfrak{L}_\partial=(\LV_\partial,\mu^\partial_n,\langle-,-\rangle_{\LV_\partial})$ as the \emph{boundary} $L_\infty$-algebra, respectively. To formulate an action principle that encodes all the fields, ghosts, ghosts-for-ghosts, etc.~and also their antifields, it is convenient to consider the degree shift by $[1]$ of~\eqref{eq:relLAlgMCAction}.\footnote{An alternative approach would be to use coordinate functions and the superfield trick~\cite{Cattaneo:0010172}; see \cite[§2.1]{Jurco:2018sby} for a review.} In particular, we have the identifications $\LV[1]\cong[1]\otimes\LV$ and $\LV_\partial[1]\cong[1]\otimes\LV_\partial$ as graded vector spaces, and upon setting
    \begin{subequations}\label{eq:degreeShift}
        \begin{equation}
            \begin{aligned}
                \tilde\mu_n([1]\otimes x_1,\ldots,[1]\otimes x_n) & \coloneqq
                \begin{cases}
                    -\mu_1(x_1) & \mbox{for}\quad n=1
                    \\
                    (-1)^{n+\sum_{i=2}^n\sum_{j=1}^{i-1}|x_j|}\mu_n(x_1,\ldots,x_n) & \mbox{else}
                \end{cases},
                \\
                \tilde\mu_n^\partial([1]\otimes x_1,\ldots,[1]\otimes x_n) & \coloneqq
                \begin{cases}
                    -\mu_1^\partial(x_1) & \mbox{for}\quad n=1
                    \\
                    (-1)^{n+\sum_{i=2}^n\sum_{j=1}^{i-1}|x_j|}\mu_n^\partial(x_1,\ldots,x_n) & \mbox{else}
                \end{cases}
            \end{aligned}
        \end{equation}
        and
        \begin{equation}
            \begin{aligned}
                \langle[1]\otimes x_1,[1]\otimes x_2\rangle_{\LV[1]} & \coloneqq(-1)^{|x_1|}\langle x_1,x_2\rangle_\LV,
                \\
                \langle[1]\otimes x_1,[1]\otimes x_2\rangle_{\LV_\partial[1]} & \coloneqq(-1)^{|x_1|}\langle x_1,x_2\rangle_{\LV_\partial}
            \end{aligned}
        \end{equation}
        as well as
        \begin{equation}
            \tilde\pi([1]\otimes x)\coloneqq[1]\otimes\pi(x),
        \end{equation}
        we obtain the degree-shifted cyclic relative $L_\infty$-algebra
        \begin{equation}\label{eq:degreeShiftedRelativeLInftyAlgebra}
            (\LV[1],\tilde\mu_n,\langle-,-\rangle_{\LV[1]})\xrightarrow{\tilde\pi}(\LV_\partial[1],\tilde\mu^\partial_n,\langle-,-\rangle_{\LV_\partial[1]}).
        \end{equation}
    \end{subequations} 
    
    \begin{definition}\label{def:relative_MC_action}
        Consider the degree-shifted cyclic relative $L_\infty$-algebra~\eqref{eq:degreeShiftedRelativeLInftyAlgebra}. The associated \emph{relative homotopy Maurer--Cartan action} is the expression
        \begin{equation}\label{eq:relative_MC_action}
            \begin{aligned}
                S_\mathrm{rhMC}&\coloneqq\sum_{n\geq2}\frac{1}{n!}[x,\ldots,x]_{\LV[1]}
                \\
                &\phantom{:}=\sum_{n\geq2}\frac{1}{n!}\Big(\langle x,\tilde\mu_{n-1}(x,\ldots,x)\rangle_{\LV[1]}+\sum_{i+j=n}\langle\tilde\pi_i(x,\ldots,x),\tilde\pi_j(x,\ldots,x)\rangle_{\LV_\partial[1]}\Big)
            \end{aligned}
        \end{equation}
        for all $x\in\LV[1]$.
    \end{definition}

    \noindent
    Evidently, the first term in~\eqref{eq:relative_MC_action} is the standard homotopy Maurer--Cartan action~\eqref{eq:hMC-action}, and this will be the only term if the manifold on which the fields live has no boundary. In this case, $\langle-,-\rangle_\LV$ will be  cyclic. Generally, if we have a boundary, the second term will be present. However, if there are no derivative interactions, only $\pi_1$ will be non-trivial.

    \begin{remark}
        Consider a pair of cyclic $L_\infty$-algebras $(\LV,\mu_n,\langle-,-\rangle_\LV)$ and $(\LV',\mu'_n,\langle-,-\rangle_{\LV'})$, that is, both $\langle-,-\rangle_\LV$ and $\langle-,-\rangle_{\LV'}$ are assumed to be cyclic. Recall that a morphism
        \begin{equation}
            (\LV,\mu_n,\langle-,-\rangle_\LV)\xrightarrow\pi(\LV',\mu'_n,\langle-,-\rangle_{\LV'})
        \end{equation}
        is called \emph{cyclic}~\cite{Kajiura:2003ax} (see also~\cite{Jurco:2018sby,Jurco:2019bvp}) provided that
        \begin{subequations}\label{eq:cyclicMorphism}
            \begin{equation}\label{eq:cyclicMorphismA}
                \langle x_1,x_2\rangle_\LV=\langle\pi_1(x_1),\pi_1(x_2)\rangle_{\LV'}
            \end{equation}
            and
            \begin{equation}\label{eq:cyclicMorphismB}
                \sum_{i+j=n}\langle\pi_i(x_1,\ldots,x_i),\pi_j(x_{i+1},\ldots,x_{i+j})\rangle_{\LV'}=0
            \end{equation}
        \end{subequations}
        for all $n\geq3$. Consequently, upon requiring the morphism $\pi$ entering the \Cref{def:relative_Linfty,def:cyclic_structure} to be cyclic in this sense, the second term in the relative homotopy Maurer--Cartan action~\eqref{eq:relative_MC_action} drops out, and we are in the standard situation where we do not have any boundary contributions.\footnote{Note that in this case the condition~\eqref{eq:cyclicMorphismA} means that $\langle\tilde\pi_1(\phi),\tilde\pi_1(\phi)\rangle_{\LV_\partial[1]}=0$ for degree reasons.}
    \end{remark}

    \subsection{Relation to BV--BFV formalism}

    A natural formalism for describing systems with boundaries is given by the Batalin--Vilkovisky--Batalin--Fradkin--Vilkovisky (BV--BFV) formalism~\cite{Cattaneo_2014,Mnev_2019,Rejzner_2021}. It uses the language of graded manifolds, for which see~\cite{Jurco:2018sby,Jurco:2019bvp,Cattaneo:2010re,2105.02534}. The central notion of the BV--BFV formalism is that of a BV--BFV pair.

    \begin{definition}
        A \emph{BV--BFV pair} $(X,Q,\omega)\xrightarrow\Pi(X_\partial,Q_\partial,\omega_\partial)$ consists of a pair of differential graded manifolds $(X,Q)$ and $(X_\partial,Q_\partial)$ together with a morphism $\Pi\colon(X,Q)\rightarrow(X_\partial,Q_\partial)$ and closed two-forms $\omega\in\Omega^2(X)$ and $\omega_\partial\in\Omega^2(X_\partial)$ such that
        \begin{equation}\label{eq:BV-BFV}
            \mathcal L_Q\omega+\Pi^*\omega_\partial=0
            \quad\mbox{and}\quad
            \mathcal L_{Q_\partial}\omega_\partial=0.
        \end{equation}
        Here, $\mathcal L$ denotes the Lie derivative.
    \end{definition}

    Our notion of relative $L_\infty$-algebras relates to the BV--BFV formalism as follows. Firstly, recall that an $L_\infty$-algebra $(\LV,\mu_n)$ can equivalently be described as the differential graded manifold $(\LV[1],Q)$ where the homological vector field $Q$ encodes the products $\mu_n$. In addition, a cyclic structure $\langle-,-\rangle_\LV$ of degree $k$ on $\LV$ is equivalent to a constant (and hence, closed) non-degenerate two-form $\omega\in\Omega^2(\LV[1])$ of degree $k+2$ such that $\mathcal L_Q\omega=0$. We can extend this as follows.

    \begin{proposition}\label{prop:foobar}
        Given a cyclic relative $L_\infty$-algebra
        \begin{equation}
            (\LV,\mu_n,\langle-,-\rangle_\LV)\xrightarrow\pi(\LV_\partial,\mu^\partial_n,\langle-,-\rangle_{\LV_\partial}),
        \end{equation}
        there exists a BV--BFV pair
        \begin{equation}
            (\LV[1],Q,\omega)\xrightarrow\Piit(\LV_\partial[1],Q_\partial,\omega_\partial),
        \end{equation}
        where $(\LV[1],Q)$ and $(\LV_\partial[1],Q_\partial)$ are differential graded manifolds encoding the $L_\infty$-algebras $(\LV,\mu_n)$ and $(\LV_\partial,\mu^\partial_n)$, respectively, and a morphism $\Piit\colon(\LV[1],Q)\rightarrow(\LV_\partial[1],Q_\partial)$ encoding $\pi$ and where $\omega\in\Omega^2(\LV[1])$ and $\omega_\partial\in\Omega^2(\LV_\partial[1])$ are the constant two-forms of degrees $-1$ and $0$, respectively and encoding $\langle-,-\rangle_\LV$ and (the graded-symmetric part of) $\langle-,-\rangle_{\LV_\partial}$, respectively.
    \end{proposition}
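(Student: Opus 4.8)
The plan is to read each ingredient of the BV--BFV pair off the relative $L_\infty$-data through the standard differential-graded-manifold dictionary, and then to recognise the two structural equations \eqref{eq:BV-BFV} as the two cyclicity conditions \eqref{eq:RelCyclic}. Concretely, in coordinates $\{\xi^a\}$ on $\LV[1]$ and $\{\zeta^\alpha\}$ on $\LV_\partial[1]$, I would take as homological vector fields the Taylor series
\begin{equation*}
    Q\coloneqq\sum_{n\geq1}\tfrac1{n!}\,\tilde\mu_n(x,\dots,x)
    \qquad\text{and}\qquad
    Q_\partial\coloneqq\sum_{n\geq1}\tfrac1{n!}\,\tilde\mu^\partial_n(x_\partial,\dots,x_\partial)
\end{equation*}
built from the shifted brackets \eqref{eq:degreeShift}, where $x$ and $x_\partial$ are the tautological coordinates with components $\xi^a$ and $\zeta^\alpha$; their squares vanish by the $L_\infty$-relations. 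For the two-forms I would take the constant forms $\omega=\tfrac12\,\omega_{ab}\,\mathrm d\xi^a\,\mathrm d\xi^b$ and $\omega_\partial=\tfrac12\,(\omega_\partial)_{\alpha\beta}\,\mathrm d\zeta^\alpha\,\mathrm d\zeta^\beta$ whose coefficients are the shifted pairing $\langle-,-\rangle_{\LV[1]}$ and the graded-symmetric part of $\langle-,-\rangle_{\LV_\partial[1]}$, and for the morphism the pullback $\Pi^*x_\partial\coloneqq\sum_{n\geq1}\tfrac1{n!}\,\tilde\pi_n(x,\dots,x)$. A first, purely bookkeeping, check is that the pairing degrees $-3$ and $-2$ produce two-forms of degrees $-1$ and $0$ respectively --- exactly the BV and BFV degrees --- which is what the shift conventions \eqref{eq:degreeShift} are arranged to ensure. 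Closedness of $\omega$ and $\omega_\partial$ is automatic because their coefficients are constant, nondegeneracy of $\omega$ follows from that of $\langle-,-\rangle_\LV$ (and the excerpt's definition of a BV--BFV pair asks for no nondegeneracy of $\omega_\partial$), and the fact that $\Pi$ is a morphism of dg manifolds, $Q\circ\Pi^*=\Pi^*\circ Q_\partial$, is precisely the hypothesis that $\pi$ is an $L_\infty$-morphism (\Cref{def:relative_Linfty}).

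For the second equation, $\mathcal L_{Q_\partial}\omega_\partial=0$, I would simply invoke the classical correspondence between cyclic $L_\infty$-algebras and $Q$-invariant constant two-forms, applied to the boundary cyclic $L_\infty$-algebra. By construction the bracket $[[x_1,\dots,x_n]]_{\LV_\partial}$ of \Cref{def:cyclic_structure} is the graded-symmetrisation of $\langle-,-\rangle_{\LV_\partial}$ composed with $\mu^\partial_{n-1}$, so its cyclicity \eqref{eq:RelCyclic} is exactly the cyclicity of the pairing encoded in $\omega_\partial$, which is equivalent to $\mathcal L_{Q_\partial}\omega_\partial=0$.

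The heart of the matter is the first equation, $\mathcal L_Q\omega+\Pi^*\omega_\partial=0$. Since $\omega$ is closed, $\mathcal L_Q\omega=\mathrm d\,\iota_Q\omega$, and since $\Pi^*$ commutes with the de~Rham differential, $\Pi^*\omega_\partial=\tfrac12(\omega_\partial)_{\alpha\beta}\,\mathrm dP^\alpha\,\mathrm dP^\beta$ with $P\coloneqq\Pi^*x_\partial$. The one-form $\iota_Q\omega$ has value $\sum_n\tfrac1{(n-1)!}\langle\tilde\mu_{n-1}(x,\dots,x),y\rangle_{\LV[1]}$ on a constant tangent vector $y$, so that $\mathrm d\,\iota_Q\omega$ evaluated on $(y_1,y_2)$ is the graded-antisymmetrisation, in $y_1\leftrightarrow y_2$, of $\langle\tilde\mu_{n-1}(y_1,x,\dots,x),y_2\rangle_{\LV[1]}$; this is exactly the failure of cyclicity of the first summand $\langle x_1,\mu_{n-1}(x_2,\dots,x_n)\rangle_\LV$ of the bracket $[x_1,\dots,x_n]_\LV$ of \Cref{def:cyclic_structure}. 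In the same way, expanding $\mathrm dP^\alpha=\sum_i\tfrac1{(i-1)!}\tilde\pi_i(\mathrm dx,x,\dots,x)^\alpha$ turns $\Pi^*\omega_\partial$ into precisely the graded-antisymmetrisation of the remaining summands $\sum_{i+j=n}\langle\pi_i(x_1,\dots,x_i),\pi_j(x_{i+1},\dots,x_n)\rangle_{\LV_\partial}$ of the same bracket, the index split $i+j=n$ being produced by the product of the two series. The relative cyclicity condition \eqref{eq:RelCyclic} states exactly that the full bracket $[x_1,\dots,x_n]_\LV$ is graded cyclic, hence that the total graded-antisymmetrisation of its two groups of summands vanishes; combining the two computations, this is $\mathrm d\,\iota_Q\omega=-\Pi^*\omega_\partial$, as required. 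The quadratic truncation of this statement is the toy-model identity \eqref{eq:spiel-cyc-correction}.

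I expect the main obstacle to be combinatorial rather than conceptual: matching the two de~Rham differentials to the two groups of terms in \eqref{eq:RelCyclic} requires careful tracking of Koszul signs through the degree shift \eqref{eq:degreeShift}, and in particular verifying that the product $\mathrm dP^\alpha\,\mathrm dP^\beta$ reproduces the summation $\sum_{i+j=n}$ with the correct signs, and that only the graded-symmetric part of $\langle-,-\rangle_{\LV_\partial}$ survives in $\Pi^*\omega_\partial$ --- the graded-skew part being a symmetric tensor that cannot appear in a two-form, and which is instead responsible for the boundary term of the action rather than for the cyclicity correction, precisely as in the split between $\langle-,-\rangle_{\LV_\partial}^{\mathrm{sym}}$ and $\langle-,-\rangle_{\LV_\partial}^{\mathrm{skew}}$ discussed in \Cref{sec:motivating-example}. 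Once the sign conventions are fixed consistently, the equality of the two sides is forced term-by-term by \eqref{eq:RelCyclic}.
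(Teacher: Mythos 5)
Your proposal is correct and takes essentially the same route as the paper, whose entire proof is the one-line observation that the BV--BFV conditions~\eqref{eq:BV-BFV} are precisely the cyclicity conditions of \Cref{def:cyclic_structure}. Your coordinate computation --- matching $\mathcal L_{Q_\partial}\omega_\partial=0$ with the cyclicity of $[[x_1,\ldots,x_n]]_{\LV_\partial}$ and $\mathcal L_Q\omega+\Piit^*\omega_\partial=0$ with the cyclicity of $[x_1,\ldots,x_n]_\LV$ (including the observation that only the graded-symmetric part of $\langle-,-\rangle_{\LV_\partial}$ can appear in the two-form $\omega_\partial$) --- simply spells out in detail what the paper asserts in that single sentence.
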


    \begin{proof}
        The conditions~\eqref{eq:BV-BFV} are simply the conditions for a cyclic structure in \Cref{def:cyclic_structure}.
    \end{proof}

    \subsection{Minimal model and generalised scattering amplitudes}\label{ssec:minimal_model}

    Since we seek to describe scattering amplitudes, we need to consider the minimal model of a relative $L_\infty$-algebra. Since relative $L_\infty$-algebras are special cases of coloured $\infty$-algebras defined by Koszul duality, they enjoy homotopy transfer theorems; in particular, their minimal models exist. Preserving the cyclic structure requires more work, cf.~\cite{Braun:2013lwa,Braun:2017ikg}.

    \paragraph{Construction of the minimal model.}
    We can construct the minimal model explicitly as follows. Given a cyclic relative $L_\infty$-algebra $(\LV,\mu_n)\xrightarrow\pi(\LV_\partial,\mu_n^\partial)$ and deformation retracts
    \begin{equation}
        \begin{tikzcd}
            (\LV,\mu_1)\ar[loop left, "h"]\rar[shift left, "p_1"]\dar["\pi"] & (H^\bullet(\LV),0) \lar[shift left, "i_1"]
            \\
            (\LV_\partial,\mu_1^\partial)\rar[shift left, "p_1^\partial"]\ar[loop left, "h_\partial"] & (H^\bullet(\LV_\partial),0)\lar[shift left, "i_1^\partial"] 
        \end{tikzcd}
    \end{equation}
    by means of the homological perturbation lemma (see e.g.~\cite{Crainic:0403266,Loday:2012aa,Berglund:0909.3485}) we obtain the $L_\infty$-morphisms
    \begin{subequations}\label{eq:HPL}
        \begin{equation}
            \begin{tikzcd}
                (\LV,\mu_n) \rar[shift left, "p"]\dar["\pi"] & (H^\bullet(\LV),\{\overset\circ\mu{}_{n>1}\}) \lar[shift left, "i"]\dar["\overset\circ\pi"]
                \\
                (\LV_\partial,\mu^\partial_n)\rar[shift left, "p^\partial"] & (H^\bullet(\LV_\partial),\{\overset\circ\mu{}_{n>1}^\partial\})\lar[shift left, "i^\partial"] 
            \end{tikzcd}
        \end{equation}
        with $i$, $p$, $i^\partial$, and $p^\partial$ being $L_\infty$-quasi-isomorphisms and
        \begin{equation}
            \overset\circ\pi\coloneqq p_\partial\circ\pi\circ i.
        \end{equation}
    \end{subequations}
    We call
    \begin{equation}\label{eq:minimalModel}
        (H^\bullet(\LV),\overset\circ\mu{}_{n>1})\xrightarrow{\overset\circ\pi}(H^\bullet(\LV_\partial),\overset\circ\mu{}_{n>1}^\partial)
    \end{equation}
    the \emph{minimal model} of $(\LV,\mu_n)\xrightarrow\pi(\LV_\partial,\mu_n^\partial)$.
    
    Explicitly, we have the following recursion relations expressing the maps $\overset\circ\mu_i$, $\overset\circ\mu{}^\partial_i$, $\overset\circ\pi_i$ of the minimal model in terms of the original maps $\mu_i$, $\mu^\partial_i$, $\pi_i$.
    
    The first two bulk minimal model $L_\infty$-brackets are given by
    \begin{subequations}
        \begin{align}
            \begin{tikzpicture}[
                scale=1.15,
                baseline={([yshift=-0.5ex]current bounding box.center)}
            ]
                \begin{feynhand}
                    \vertex (a) at (-1,-1);
                    \vertex (b) at (1,-1);
                    \vertex (out) at (0,1);
                    \vertex [ringdot] (centre) at (0,0) {$\overset\circ\mu_2$};
                    \propag (a) to (centre);
                    \propag (b) to (centre);
                    \propag (centre) to (out);
                \end{feynhand}
            \end{tikzpicture}
            &=
            \begin{tikzpicture}[
                scale=1.15,
                baseline={([yshift=-0.5ex]current bounding box.center)}
            ]
                \begin{feynhand}
                    \vertex (a) at (-1,-1) {$i$};
                    \vertex (b) at (1,-1) {$i$};
                    \vertex (out) at (0,1) {$p$};
                    \vertex [ringdot] (centre) at (0,0) {$\mu_2$};
                    \propag (a) to (centre);
                    \propag (b) to (centre);
                    \propag (centre) to (out);
                \end{feynhand}
            \end{tikzpicture}
            \\
            \begin{tikzpicture}[
                scale=1.15,
                baseline={([yshift=-0.5ex]current bounding box.center)}
                ]
                \begin{feynhand}
                    \vertex (a) at (-1,-1) ;
                    \vertex (b) at (1,-1) ;
                    \vertex (c) at (0,-1) ;
                    \vertex (out) at (0,1) ;
                    \vertex [ringdot] (centre) at (0,0) {$\overset\circ\mu_3$};
                    \propag (a) to (centre);
                    \propag (b) to (centre);
                    \propag (c) to (centre);
                    \propag (centre) to (out);
                \end{feynhand}
            \end{tikzpicture}
            &=
            \begin{tikzpicture}[
                scale=1.15,
                baseline={([yshift=-0.5ex]current bounding box.center)}
            ]
                \begin{feynhand}
                    \vertex (a) at (-1,-1) {$i$};
                    \vertex (b) at (1,-1) {$i$};
                    \vertex (c) at (0,-1) {$i$};
                    \vertex (out) at (0,1) {$p$};
                    \vertex [ringdot] (centre) at (0,0) {$\mu_3$};
                    \propag (a) to (centre);
                    \propag (b) to (centre);
                    \propag (c) to (centre);
                    \propag (centre) to (out);
                \end{feynhand}
            \end{tikzpicture}
            +
            \begin{tikzpicture}[
                scale=1.15,
                baseline={([yshift=-0.5ex]current bounding box.center)}
            ]
                \begin{feynhand}
                    \vertex (a) at (-1,-1) {$i$};
                    \vertex (b) at (0,-1) {$i$};
                    \vertex (c) at (1,-1) {$i$};
                    \vertex (out) at (0,1) {$p$};
                    \vertex [ringdot] (centre) at (0,0.3) {$\mu_2$};
                    \vertex [ringdot] (centre2) at (-0.3,-0.3) {$\mu_2$};
                    \propag (a) to (centre2);
                    \propag (b) to (centre2);
                    \propag (centre) to [edge label'={$h$}] (centre2);
                    \propag (c) to (centre);
                    \propag (centre) to (out);
                \end{feynhand}
            \end{tikzpicture}
            +
            \begin{tikzpicture}[
                scale=1.15,
                baseline={([yshift=-0.5ex]current bounding box.center)}
            ]
                \begin{feynhand}
                    \vertex (a) at (-1,-1) {$i$};
                    \vertex (b) at (0,-1) {$i$};
                    \vertex (c) at (1,-1) {$i$};
                    \vertex (out) at (0,1) {$p$};
                    \vertex [ringdot] (centre) at (0,0.3) {$\mu_2$};
                    \vertex [ringdot] (centre2) at (0.3,-0.3) {$\mu_2$};
                    \propag (a) to (centre);
                    \propag (b) to (centre2);
                    \propag (centre) to [edge label={$h$}] (centre2);
                    \propag (c) to (centre2);
                    \propag (centre) to (out);
                \end{feynhand}
            \end{tikzpicture}
            +
            \begin{tikzpicture}[
                scale=1.15,
                baseline={([yshift=-0.5ex]current bounding box.center)}
            ]
                \begin{feynhand}
                    \vertex (a) at (-1,-1) {$i$};
                    \vertex (b) at (0,-1) {$i$};
                    \vertex (c) at (1,-1) {$i$};
                    \vertex (out) at (0,1) {$p$};
                    \vertex [ringdot] (centre) at (0.3,0.3) {$\mu_2$};
                    \vertex [ringdot] (centre2) at (-0.3,-0.3) {$\mu_2$};
                    \propag (a) to (centre2);
                    \propag (b) to (centre);
                    \propag (centre) to [edge label'={$h$}](centre2);
                    \propag (c) to (centre2);
                    \propag (centre) to (out);
                \end{feynhand}
            \end{tikzpicture}
        \end{align}
        with the higher bracket given by the obvious generalisation. We recognise these as the usual Feynman diagram expansion, with propagator $h$ and $n$-point vertices $\mu_n$. 
        
        The boundary minimal model $L_\infty$-brackets are analogously given by 
        \begin{align}
            \begin{tikzpicture}[
                scale=1.15,
                baseline={([yshift=-0.5ex]current bounding box.center)}
            ]
                \begin{feynhand}
                    \vertex (a) at (-1,-1) ;
                    \vertex (b) at (1,-1) ;
                    \vertex (out) at (0,1) ;
                    \vertex [ringdot] (centre) at (0,0) {$\overset\circ\mu{}^\partial_2$};
                    \propag (a) to (centre);
                    \propag (b) to (centre);
                    \propag (centre) to (out);
                \end{feynhand}
            \end{tikzpicture}
            &=
            \begin{tikzpicture}[
                scale=1.15,
                baseline={([yshift=-0.5ex]current bounding box.center)}
            ]
                \begin{feynhand}
                    \vertex (a) at (-1,-1) {$i$};
                    \vertex (b) at (1,-1) {$i$};
                    \vertex (out) at (0,1) {$p$};
                    \vertex [ringdot] (centre) at (0,0) {$\mu^\partial_2$};
                    \propag (a) to (centre);
                    \propag (b) to (centre);
                    \propag (centre) to (out);
                \end{feynhand}
            \end{tikzpicture}
            \\
            \begin{tikzpicture}[
                scale=1.15, baseline={([yshift=-0.5ex]current bounding box.center)}
            ]
                \begin{feynhand}
                    \vertex (a) at (-1,-1) ;
                    \vertex (b) at (1,-1) ;
                    \vertex (c) at (0,-1) ;
                    \vertex (out) at (0,1) ;
                    \vertex [ringdot] (centre) at (0,0) {$\overset\circ\mu{}^\partial_3$};
                    \propag (a) to (centre);
                    \propag (b) to (centre);
                    \propag (c) to (centre);
                    \propag (centre) to (out);
                \end{feynhand}
            \end{tikzpicture}
            &=
            \begin{tikzpicture}[
                scale=1.15, baseline={([yshift=-0.5ex]current bounding box.center)}
            ]
                \begin{feynhand}
                    \vertex (a) at (-1,-1) {$i$};
                    \vertex (b) at (1,-1) {$i$};
                    \vertex (c) at (0,-1) {$i$};
                    \vertex (out) at (0,1) {$p$};
                    \vertex [ringdot] (centre) at (0,0) {$\mu^\partial_3$};
                    \propag (a) to (centre);
                    \propag (b) to (centre);
                    \propag (c) to (centre);
                    \propag (centre) to (out);
                \end{feynhand}
            \end{tikzpicture}
            +
            \begin{tikzpicture}[
                scale=1.15,
                baseline={([yshift=-0.5ex]current bounding box.center)}
            ]
                \begin{feynhand}
                    \vertex (a) at (-1,-1) {$i$};
                    \vertex (b) at (0,-1) {$i$};
                    \vertex (c) at (1,-1) {$i$};
                    \vertex (out) at (0,1) {$p$};
                    \vertex [ringdot] (centre) at (0,0.3) {$\mu^\partial_2$};
                    \vertex [ringdot] (centre2) at (-0.3,-0.3) {$\mu^\partial_2$};
                    \propag (a) to (centre2);
                    \propag (b) to (centre2);
                    \propag (centre) to [edge label'={$h$}] (centre2);
                    \propag (c) to (centre);
                    \propag (centre) to (out);
                \end{feynhand}
            \end{tikzpicture}
            +
            \begin{tikzpicture}[
                scale=1.15,
                baseline={([yshift=-0.5ex]current bounding box.center)}
            ]
                \begin{feynhand}
                    \vertex (a) at (-1,-1) {$i$};
                    \vertex (b) at (0,-1) {$i$};
                    \vertex (c) at (1,-1) {$i$};
                    \vertex (out) at (0,1) {$p$};
                    \vertex [ringdot] (centre) at (0,0.3) {$\mu^\partial_2$};
                    \vertex [ringdot] (centre2) at (0.3,-0.3) {$\mu^\partial_2$};
                    \propag (a) to (centre);
                    \propag (b) to (centre2);
                    \propag (centre) to [edge label={$h$}] (centre2);
                    \propag (c) to (centre2);
                    \propag (centre) to (out);
                \end{feynhand}
            \end{tikzpicture}
            +
            \begin{tikzpicture}[
                scale=1.15, baseline={([yshift=-0.5ex]current bounding box.center)}
            ]
                \begin{feynhand}
                    \vertex (a) at (-1,-1) {$i$};
                    \vertex (b) at (0,-1) {$i$};
                    \vertex (c) at (1,-1) {$i$};
                    \vertex (out) at (0,1) {$p$};
                    \vertex [ringdot] (centre) at (0.3,0.3) {$\mu^\partial_2$};
                    \vertex [ringdot] (centre2) at (-0.3,-0.3) {$\mu^\partial_2$};
                    \propag (a) to (centre2);
                    \propag (b) to (centre);
                    \propag (centre) to [edge label'={$h$}](centre2);
                    \propag (c) to (centre2);
                    \propag (centre) to (out);
                \end{feynhand}
            \end{tikzpicture}
        \end{align}        
        Again, there is the obvious generalisation to higher brackets and Feynman diagrammatic interpretation.  
    
        More novel is that the bulk and boundary minimal models are connected by  the minimal model bulk-to-boundary morphisms
        \begin{align}        
            \begin{tikzpicture}[
                scale=1.15,
                baseline={([yshift=-0.5ex]current bounding box.center)}
            ]
                \begin{feynhand}
                    \vertex (a) at (0,-1) ;
                    \vertex (out) at (0,1) ;
                    \vertex [ringdot] (centre) at (0,0) {$\overset\circ\pi_1$};
                    \propag (a) to (centre);
                    \propag (centre) to (out);
                \end{feynhand}
            \end{tikzpicture}
            &=
            \begin{tikzpicture}[
                scale=1.15,
                baseline={([yshift=-0.5ex]current bounding box.center)}
            ]
                \begin{feynhand}
                    \vertex (a) at (0,-1) {$i$};
                    \vertex (out) at (0,1) {$p$};
                    \vertex [ringdot] (centre) at (0,0) {$\pi_1$};
                    \propag (a) to (centre);
                    \propag (centre) to (out);
                \end{feynhand}
            \end{tikzpicture}
            \\
            \begin{tikzpicture}[
                scale=1.15,
                baseline={([yshift=-0.5ex]current bounding box.center)}
            ]
                \begin{feynhand}
                    \vertex (a) at (1,-1) ;
                    \vertex (b) at (-1,-1) ;
                    \vertex (out) at (0,1) ;
                    \vertex [ringdot] (centre) at (0,0) {$\overset\circ\pi_2$};
                    \propag (a) to (centre);
                    \propag (b) to (centre);
                    \propag (centre) to (out);
                \end{feynhand}
            \end{tikzpicture}
            &=
            \begin{tikzpicture}[
                scale=1.15,
                baseline={([yshift=-0.5ex]current bounding box.center)}
            ]
                \begin{feynhand}
                    \vertex (a) at (1,-1) {$i$};
                    \vertex (b) at (-1,-1) {$i$};
                    \vertex (out) at (0,1) {$p$};
                    \vertex [ringdot] (centre) at (0,0) {$\pi_2$};
                    \propag (a) to (centre);
                    \propag (b) to (centre);
                    \propag (centre) to (out);
                \end{feynhand}
            \end{tikzpicture}
            +
            \begin{tikzpicture}[
                scale=1.15,
                baseline={([yshift=-0.5ex]current bounding box.center)}
            ]
                \begin{feynhand}
                    \vertex (a) at (1,-1) {$i$};
                    \vertex (b) at (-1,-1) {$i$};
                    \vertex (out) at (0,1) {$p$};
                    \vertex [ringdot] (centre1) at (0,-0.3) {$\mu_2$};
                    \vertex [ringdot] (centre2) at (0,0.3) {$\pi_1$};
                    \propag (a) to (centre1);
                    \propag (b) to (centre1);
                    \propag (centre1) to [edge label={$h$}] (centre2);
                    \propag (centre2) to (out);
                \end{feynhand}
            \end{tikzpicture}
            +
            \begin{tikzpicture}[
                scale=1.15,
                baseline={([yshift=-0.5ex]current bounding box.center)}
            ]
                \begin{feynhand}
                    \vertex (a) at (-1,-1) {$i$};
                    \vertex (b) at (1,-1) {$i$};
                    \vertex (out) at (0,1) {$p$};
                    \vertex [ringdot] (centre1a) at (-0.3,-0.3) {$\pi_1$};
                    \vertex [ringdot] (centre1b) at (0.3,-0.3) {$\pi_1$};
                    \vertex [ringdot] (centre2) at (0,0.3) {$\mu^\partial_2$};
                    \propag (a) to (centre1a);
                    \propag (b) to (centre1b);
                    \propag (centre1a) to [edge label={$h$}] (centre2);
                    \propag (centre1b) to [edge label'={$h$}] (centre2);
                    \propag (centre2) to (out);
                \end{feynhand}
            \end{tikzpicture}
        \end{align}
        \begin{align}
            \begin{tikzpicture}[
                scale=1.15,
                baseline={([yshift=-0.5ex]current bounding box.center)}
            ]
                \begin{feynhand}
                    \vertex (a) at (-1,-1) ;
                    \vertex (b) at (1,-1) ;
                    \vertex (c) at (0,-1) ;
                    \vertex (out) at (0,1) ;
                    \vertex [ringdot] (centre) at (0,0) {$\overset\circ\pi_3$};
                    \propag (a) to (centre);
                    \propag (b) to (centre);
                    \propag (c) to (centre);
                    \propag (centre) to (out);
                \end{feynhand}
            \end{tikzpicture}
            &=
            \begin{tikzpicture}[
                scale=1.15,
                baseline={([yshift=-0.5ex]current bounding box.center)}
            ]
                \begin{feynhand}
                    \vertex (a) at (-1,-1) {$i$};
                    \vertex (b) at (1,-1) {$i$};
                    \vertex (c) at (0,-1) {$i$};
                    \vertex (out) at (0,1) {$p$};
                    \vertex [ringdot] (centre) at (0,0) {$\pi_3$};
                    \propag (a) to (centre);
                    \propag (b) to (centre);
                    \propag (c) to (centre);
                    \propag (centre) to (out);
                \end{feynhand}
            \end{tikzpicture}
            +
            \begin{tikzpicture}[
                scale=1.15,
                baseline={([yshift=-0.5ex]current bounding box.center)}
            ]
                \begin{feynhand}
                    \vertex (a) at (-1,-1) {$i$};
                    \vertex (b) at (0,-1) {$i$};
                    \vertex (c) at (1,-1) {$i$};
                    \vertex (out) at (0,1) {$p$};
                    \vertex [ringdot] (centre) at (0,0.3) {$\pi_2$};
                    \vertex [ringdot] (centre2) at (-0.3,-0.3) {$\mu_2$};
                    \propag (a) to (centre2);
                    \propag (b) to (centre2);
                    \propag (centre) to [edge label'={$h$}] (centre2);
                    \propag (c) to (centre);
                    \propag (centre) to (out);
                \end{feynhand}
            \end{tikzpicture}
            +
            \begin{tikzpicture}[
                scale=1.15,
                baseline={([yshift=-0.5ex]current bounding box.center)}
            ]
                \begin{feynhand}
                    \vertex (a) at (-1,-1) {$i$};
                    \vertex (b) at (0,-1) {$i$};
                    \vertex (c) at (1,-1) {$i$};
                    \vertex (out) at (0,1) {$p$};
                    \vertex [ringdot] (centre) at (0,0.3) {$\pi_2$};
                    \vertex [ringdot] (centre2) at (0.3,-0.3) {$\mu_2$};
                    \propag (a) to (centre);
                    \propag (b) to (centre2);
                    \propag (centre) to [edge label={$h$}] (centre2);
                    \propag (c) to (centre2);
                    \propag (centre) to (out);
                \end{feynhand}
            \end{tikzpicture}
            +
            \begin{tikzpicture}[
                scale=1.15,
                baseline={([yshift=-0.5ex]current bounding box.center)}
            ]
                \begin{feynhand}
                    \vertex (a) at (-1,-1) {$i$};
                    \vertex (b) at (0,-1) {$i$};
                    \vertex (c) at (1,-1) {$i$};
                    \vertex (out) at (0,1) {$p$};
                    \vertex [ringdot] (centre) at (0.3,0.3) {$\pi_2$};
                    \vertex [ringdot] (centre2) at (-0.3,-0.3) {$\mu_2$};
                    \propag (a) to (centre2);
                    \propag (b) to (centre);
                    \propag (centre) to [edge label'={$h$}](centre2);
                    \propag (c) to (centre2);
                    \propag (centre) to (out);
                \end{feynhand}
            \end{tikzpicture}
            \\
            &\qquad+
            \begin{tikzpicture}[
                scale=1.15,
                baseline={([yshift=-0.5ex]current bounding box.center)}
            ]
                \begin{feynhand}
                    \vertex (a) at (-1,-1) {$i$};
                    \vertex (b) at (0,-1) {$i$};
                    \vertex (c) at (1,-1) {$i$};
                    \vertex (out) at (0,1) {$p$};
                    \vertex [ringdot] (centre) at (0,0.3) {$\mu^\partial_2$};
                    \vertex [ringdot] (centre2) at (-0.3,-0.3) {$\pi_2$};
                    \propag (a) to (centre2);
                    \propag (b) to (centre2);
                    \propag (centre) to [edge label'={$h$}] (centre2);
                    \propag (c) to (centre);
                    \propag (centre) to (out);
                \end{feynhand}
            \end{tikzpicture}
            +
            \begin{tikzpicture}[
                scale=1.15,
                baseline={([yshift=-0.5ex]current bounding box.center)}
            ]
                \begin{feynhand}
                    \vertex (a) at (-1,-1) {$i$};
                    \vertex (b) at (0,-1) {$i$};
                    \vertex (c) at (1,-1) {$i$};
                    \vertex (out) at (0,1) {$p$};
                    \vertex [ringdot] (centre) at (0,0.3) {$\mu^\partial_2$};
                    \vertex [ringdot] (centre2) at (0.3,-0.3) {$\pi_2$};
                    \propag (a) to (centre);
                    \propag (b) to (centre2);
                    \propag (centre) to [edge label={$h$}] (centre2);
                    \propag (c) to (centre2);
                    \propag (centre) to (out);
                \end{feynhand}
            \end{tikzpicture}
            +
            \begin{tikzpicture}[
                scale=1.15,
                baseline={([yshift=-0.5ex]current bounding box.center)}
            ]
                \begin{feynhand}
                    \vertex (a) at (-1,-1) {$i$};
                    \vertex (b) at (0,-1) {$i$};
                    \vertex (c) at (1,-1) {$i$};
                    \vertex (out) at (0,1) {$p$};
                    \vertex [ringdot] (centre) at (0.3,0.3) {$\mu^\partial_2$};
                    \vertex [ringdot] (centre2) at (-0.3,-0.3) {$\pi_2$};
                    \propag (a) to (centre2);
                    \propag (b) to (centre);
                    \propag (centre) to [edge label'={$h$}](centre2);
                    \propag (c) to (centre2);
                    \propag (centre) to (out);
                \end{feynhand}
            \end{tikzpicture}\notag
            \\
            &\qquad+
            \begin{tikzpicture}[
                scale=1.15,
                baseline={([yshift=-0.5ex]current bounding box.center)}
            ]
                \begin{feynhand}
                    \vertex (a) at (-1,-1) {$i$};
                    \vertex (b) at (1,-1) {$i$};
                    \vertex (c) at (0,-1) {$i$};
                    \vertex (out) at (0,1) {$p$};
                    \vertex [ringdot] (centre1) at (0,-0.3) {$\mu_3$};
                    \vertex [ringdot] (centre2) at (0,0.3) {$\pi_1$};
                    \propag (a) to (centre1);
                    \propag (b) to (centre1);
                    \propag (c) to (centre1);
                    \propag (centre1) to [edge label={$h$}](centre2);
                    \propag (centre2) to (out);
                \end{feynhand}
            \end{tikzpicture}
            +
            \begin{tikzpicture}[
                scale=1.15,
                baseline={([yshift=-0.5ex]current bounding box.center)}
            ]
                \begin{feynhand}
                    \vertex (a) at (-1,-1) {$i$};
                    \vertex (b) at (0,-1) {$i$};
                    \vertex (c) at (1,-1) {$i$};
                    \vertex (out) at (0,1) {$p$};
                    \vertex [ringdot] (centre1a) at (-0.5,-0.3) {$\pi_1$};
                    \vertex [ringdot] (centre1b) at (0,-0.3) {$\pi_1$};
                    \vertex [ringdot] (centre1c) at (0.5,-0.3) {$\pi_1$};
                    \vertex [ringdot] (centre2) at (0,0.3) {$\mu^\partial_3$};
                    \propag (a) to (centre1a);
                    \propag (b) to (centre1b);
                    \propag (c) to (centre1c);
                    \propag (centre1a) to [edge label={$h$}] (centre2);
                    \propag (centre1b) to [edge label={$h$}] (centre2);
                    \propag (centre1c) to [edge label'={$h$}] (centre2);
                    \propag (centre2) to (out);
                \end{feynhand}
            \end{tikzpicture}.\notag
        \end{align}
    \end{subequations}
    
    \paragraph{Generalised scattering amplitudes.}
    Using the minimal model, we can now define the generalised scattering amplitudes as follows. In a quantum field theory, the full $S$-matrix,
    \begin{equation}
        S_{2\to2} =
        \underbrace{
            \begin{tikzpicture}[
                baseline={([yshift=-0.5ex]current bounding box.center)},
                scale=0.4
            ]
                \begin{feynhand}
                    \vertex (a) at (-1,-1);
                    \vertex (b) at (1,-1);
                    \vertex (c) at (-1,1);
                    \vertex (d) at (1,1);
                    \vertex (o) at (0,0);
                    \propag (a) to (c);
                    \propag (b) to (d);
                \end{feynhand}
            \end{tikzpicture}
            ~~~
            +
            ~~~
            \begin{tikzpicture}[
                baseline={([yshift=-0.5ex]current bounding box.center)},
                scale=0.4
            ]
                \begin{feynhand}
                    \vertex (a) at (-1,-1);
                    \vertex (b) at (1,-1);
                    \vertex (c) at (-1,1);
                    \vertex (d) at (1,1);
                    \vertex (o) at (0,0);
                    \propag (a) to (d);
                    \propag (b) to (c);
                \end{feynhand}
            \end{tikzpicture}
        }_{\text{identity part}}
        ~~~
        +
        ~~~
        \begin{tikzpicture}[
            baseline={([yshift=-0.5ex]current bounding box.center)},
            scale=0.4
        ]
            \begin{feynhand}
                \vertex (a) at (-1,-1);
                \vertex (b) at (1,-1);
                \vertex (c) at (-1,1);
                \vertex (d) at (1,1);
                \vertex (o) at (0,0);
                \vertex [dot] (o1) at (0,-0.5) {};
                \vertex [dot] (o2) at (0,0.5) {};
                \propag (a) to (o1) to (b);
                \propag (o1) to (o2);
                \propag (c) to (o2) to (d);
            \end{feynhand}
        \end{tikzpicture}
        ~~~
        +
        ~~~
        \begin{tikzpicture}[
            baseline={([yshift=-0.5ex]current bounding box.center)},
            scale=0.4
        ]
            \begin{feynhand}
                \vertex (a) at (-1,-1);
                \vertex (b) at (1,-1);
                \vertex (c) at (-1,1);
                \vertex (d) at (1,1);
                \vertex (o) at (0,0);
                \vertex [dot] (o1) at (-0.5,0) {};
                \vertex [dot] (o2) at (0.5,0) {};
                \propag (a) to (o1) to (c);
                \propag (o1) to (o2);
                \propag (b) to (o2) to (d);
            \end{feynhand}
        \end{tikzpicture}
        ~~~
        +
        ~~~
        \begin{tikzpicture}[
            baseline={([yshift=-0.5ex]current bounding box.center)},
            scale=0.4
        ]
            \begin{feynhand}
                \vertex (a) at (-1,-1);
                \vertex (b) at (1,-1);
                \vertex (c) at (-1,1);
                \vertex (d) at (1,1);
                \vertex (o) at (0,0);
                \vertex [dot] (o1) at (-0.5,0) {};
                \vertex [dot] (o2) at (0.5,0) {};
                \propag (a) to (o1) to (d);
                \propag (o1) to (o2);
                \propag (c) to (o2) to (b);
            \end{feynhand}
        \end{tikzpicture}
        ~~~
        +
        ~~~
        \begin{tikzpicture}[
            baseline={([yshift=-0.5ex]current bounding box.center)},
            scale=0.4
        ]
            \begin{feynhand}
                \vertex (a) at (-1,-1);
                \vertex (b) at (1,-1);
                \vertex (c) at (-1,1);
                \vertex (d) at (1,1);
                \vertex [dot] (o) at (0,0) {};
                \propag (a) to (d);
                \propag (b) to (c);
            \end{feynhand}
        \end{tikzpicture}
        ~~~
        +
        ~~~
        \cdots,
    \end{equation}
    comes from exponentiating the connected (Wick) diagrams,
    \begin{equation}
        \left\{
            \quad
            \begin{tikzpicture}[
                baseline={([yshift=-0.5ex]current bounding box.center)},
                scale=0.4
            ]
                \begin{feynhand}
                    \vertex (a) at (0,-1);
                    \vertex (b) at (0,1);
                    \propag (a) to (b);
                \end{feynhand}
            \end{tikzpicture}
            \quad,\quad
            \begin{tikzpicture}[
                baseline={([yshift=-0.5ex]current bounding box.center)},
                scale=0.4
            ]
                \begin{feynhand}
                    \vertex (a) at (-1,-1);
                    \vertex (b) at (1,-1);
                    \vertex (c) at (-1,1);
                    \vertex (d) at (1,1);
                    \vertex (o) at (0,0);
                    \vertex [dot] (o1) at (0,-0.5) {};
                    \vertex [dot] (o2) at (0,0.5) {};
                    \propag (a) to (o1) to (b);
                    \propag (o1) to (o2);
                    \propag (c) to (o2) to (d);
                \end{feynhand}
            \end{tikzpicture}
            ,\quad
            \begin{tikzpicture}[
                baseline={([yshift=-0.5ex]current bounding box.center)},
                scale=0.4
            ]
                \begin{feynhand}
                    \vertex (a) at (-1,-1);
                    \vertex (b) at (1,-1);
                    \vertex (c) at (-1,1);
                    \vertex (d) at (1,1);
                    \vertex (o) at (0,0);
                    \vertex [dot] (o1) at (-0.5,0) {};
                    \vertex [dot] (o2) at (0.5,0) {};
                    \propag (a) to (o1) to (c);
                    \propag (o1) to (o2);
                    \propag (b) to (o2) to (d);
                \end{feynhand}
            \end{tikzpicture}
            ,\quad
            \begin{tikzpicture}[
                baseline={([yshift=-0.5ex]current bounding box.center)},
                scale=0.4
            ]
                \begin{feynhand}
                    \vertex (a) at (-1,-1);
                    \vertex (b) at (1,-1);
                    \vertex (c) at (-1,1);
                    \vertex (d) at (1,1);
                    \vertex (o) at (0,0);
                    \vertex [dot] (o1) at (-0.5,0) {};
                    \vertex [dot] (o2) at (0.5,0) {};
                    \propag (a) to (o1) to (d);
                    \propag (o1) to (o2);
                    \propag (c) to (o2) to (b);
                \end{feynhand}
            \end{tikzpicture}
            ,\quad
            \begin{tikzpicture}[
                baseline={([yshift=-0.5ex]current bounding box.center)},
                scale=0.4
            ]
                \begin{feynhand}
                    \vertex (a) at (-1,-1);
                    \vertex (b) at (1,-1);
                    \vertex (c) at (-1,1);
                    \vertex (d) at (1,1);
                    \vertex [dot] (o) at (0,0) {};
                    \propag (a) to (d);
                    \propag (b) to (c);
                \end{feynhand}
            \end{tikzpicture}
            ,\quad\ldots
        \right\}.
    \end{equation}
    Of the diagrams that comprise $S_\text{conn}$, the `trivial' diagram `$|$' is special. The minimal model of an $L_\infty$-algebra contains the data for all connected diagrams \emph{except} for the trivial diagram, but the trivial diagram is crucial in reproducing the full $S$-matrix. And, in fact, the `trivial' diagram is not so trivial in general in curved geometries such as anti-de~Sitter space.

    \begin{definition}\label{def:generalized_amplitude}
        Consider the minimal model~\eqref{eq:minimalModel} of a relative cyclic $L_\infty$-algebra as well its degree shift constructed by means of~\eqref{eq:degreeShift}. At the tree level, the associated \emph{ generalised connected $n$-point scattering amplitude} is given by the expression
        \begin{equation}\label{eq:generalized_amplitude}
            \begin{aligned}
                [\phi_1,\ldots,\phi_n]_{H^\bullet(\LV)[1]} & =\langle\phi_1,\tilde{\overset\circ\mu}_{n-1}(\phi_2,\ldots,\phi_n)\rangle_{H^\bullet(\LV)}
                \\
                & \kern2cm+\sum_{i+j=n}\langle\tilde{\overset\circ\pi}_i(\phi_1,\ldots,\phi_i),\tilde{\overset\circ\pi}_j(\phi_{i+1},\ldots,\phi_n)\rangle_{H^\bullet(\LV_\partial)}
            \end{aligned}
        \end{equation}
        for $\phi_{1,\ldots,n}\in H^\bullet(\LV)[1]$.\footnote{Put differently, the generalised scattering amplitudes follow from the polarisation of the relative homotopy Maurer--Cartan action~\eqref{eq:relative_MC_action} for the degree-shifted minimal model of a relative cyclic $L_\infty$-algebra.}
    \end{definition}

    \noindent
    Note that, whilst $\langle-,-\rangle_{H^\bullet(\LV)}$ need not be cyclic with respect to $\overset\circ\mu_i$, any failure of cyclicity is compensated by the antisymmetric part of $\langle-,-\rangle_{H^\bullet(\LV_\partial)}$ according to \Cref{def:cyclic_structure}, such that the $n$-point connected scattering amplitudes are guaranteed to respect crossing (Bose) symmetry. Note also that, for the two-point scattering amplitude, whilst the first term in~\eqref{eq:generalized_amplitude} vanishes since $\overset\circ\mu_1=0$ by construction, the second term will, in general, not be zero, so that we recover the trivial piece of the $S$-matrix.

    The full (rather than connected) tree-level $S$-matrix can be easily defined in terms of the connected tree-level $S$-matrix. For the full quantum $S$-matrix, there is an evident generalisation to the loop case following~\cite{Jurco:2019yfd,Saemann:2020oyz}, where one should consider a loop relative $L_\infty$-algebra.

    \paragraph{Boundary contributions to higher-point scattering amplitudes.}
    In the above, suppose that $\pi$ is strict, that is, $\pi_{n>1}=0$ and that
    \begin{equation}
        \pi_1\circ h=0
    \end{equation}
    for $h$ the contracting homotopy. According to explicit formulas for homotopy transfer, the higher-order components $i_{n>1}$ of the $L_\infty$-quasi-isomorphism $i$ in~\eqref{eq:HPL} satisfy
    \begin{equation}
        \operatorname{im}(i_n)\subseteq\operatorname{im}(h).
    \end{equation}
    Then it follows that the boundary terms in~\eqref{eq:generalized_amplitude} are trivial except for two-point scattering amplitudes. Therefore, in this case, we only have boundary corrections to the two-point scattering amplitudes but not to higher-point ones, consistent with the fact that the usual $L_\infty$-algebra formalism (ignoring boundary contributions) yield the correct connected tree-level scattering amplitudes for $n>2$ points.

    \section{Examples}

    Let us now discuss applications of the above setup. We start off considering the simplest examples of scalar field theory, before moving on to Chern--Simons theory and Yang--Mills theory. 

    \subsection{Scalar field theory on a manifold-with-boundary}\label{ssec:scalar_field}

    Having setting out the definitions in \Cref{sec:basicDefinitions}, we can now  return to the   motivating example from \Cref{sec:motivating-example},  first concisely summarising the general setup for compact Riemannian manifold with boundary and then presenting the corresponding relative minimal model. 
 
    We then consider the more subtle cases of Euclidean flat  and anti-de Sitter space, where the boundaries are asymptotic. In the latter case, we shall reproduce the well-known results from the AdS/CFT literature. See also~\cite{chiaffrino2023holography} for  analogous results, obtained in a conceptually similar manner (although, since our underlying framework is different, the technical details remain distinct).  
 
    \subsubsection{Compact Riemannian manifolds-with-boundary}

    Consider a scalar field on scalar field $\phi$ of mass $m$ on an oriented compact Riemannian manifold $(M,g)$ with boundary $\partial M$ and cubic interactions.
 
    \paragraph{\mathversion{bold}Relative $L_\infty$-algebra.} 
    The associated cochain complex of the relative $L_\infty$-algebra is\footnote{This complex differs from the analogous structures in~\cite{chiaffrino2023holography} (see eqq.~(3.31) and (3.38) therein) in two regards. Firstly, the `bulk' and `boundary' (anti)fields are contained in single relative $L_\infty$-algebra, as opposed to belonging to  quasi-isomorphic, but distinct, $L_\infty$-algebras related by homotopy retract data. Indeed, we have a homotopy transfer from the full relative bulk$+$boundary relative $L_\infty$-algebra to a minimal bulk$+$boundary relative $L_\infty$-algebra. Secondly, the boundary fields are doubled and the space of boundary antifields is trivial.}
    \begin{subequations}
        \begin{equation}
            \begin{tikzcd}
                0 \rar & C^\infty(M)\rar["\mu_1"]\dar["\pi_1"] & C^\infty(M)\rar\dar["0"] & 0 
                \\
                0 \rar & C^\infty(\partial M)\oplus C^\infty(\partial M)\rar & 0
            \end{tikzcd}
        \end{equation}
        where the first row represents $\LV$ and the second row $\LV_\partial$ in the notation of \Cref{sec:basicDefinitions}. The non-trivial brackets and components of $\pi$ are
        \begin{equation}
            \begin{gathered}
                \mu_1(\phi)\coloneqq(\Delta-m^2)\phi,
                \quad
                \mu_2(\phi,\phi')\coloneqq-\lambda\phi\phi',
                \\
                \pi_1(\phi)\coloneqq
                \begin{pmatrix}
                    \phi|_{\partial M}\\ -\partial_N\phi
                \end{pmatrix},
            \end{gathered}
        \end{equation}
        where $\lambda$ is the coupling constant. As in \Cref{sec:motivating-example}, $\partial_N$ is (the differential operator associated with) the vector field normal to $\partial M$, and $\phi|_{\partial M}$ is the restriction of $\phi$ to $\partial M$. Note that the doubled function space in $\LV_\partial$ can be seen as the space of functions on a first-order neighbourhood of $\partial M$. Furthermore, we set
        \begin{equation}
            \langle\phi,\phi'^+\rangle_\LV\coloneqq\int_M\operatorname{vol}_M\phi\phi'^+\eqqcolon\langle\phi'^+,\phi\rangle_\LV
            \quad\mbox{and}\quad
            \left\langle
            \begin{pmatrix}
                \alpha\\ \beta
            \end{pmatrix},
            \begin{pmatrix}
                \alpha'\\ \beta'
            \end{pmatrix}
            \right\rangle_{\LV_\partial}
            \coloneqq\int_{\partial M}\operatorname{vol}_{\partial M}\alpha\beta'.
        \end{equation}
    \end{subequations}

    The relative homotopy Maurer--Cartan action~\eqref{eq:relative_MC_action} for this relative $L_\infty$-algebra is
    \begin{equation}
        \begin{aligned}
            S&=\int_M\operatorname{vol}_M\Big\{\tfrac12\phi(\Delta-m^2)\phi-\tfrac1{3!}\lambda\phi^3\Big\}-\tfrac12\int_{\partial M}\operatorname{vol}_{\partial M}\phi \partial_N\phi
            \\
            &=-\int_M\operatorname{vol}_M\Big\{\tfrac12(\partial\phi)^2+\tfrac12m^2\phi^2+\tfrac1{3!}\lambda\phi^3\Big\}.
        \end{aligned}
    \end{equation}

    \paragraph{Relative minimal model.}    
    The cochain complex of the minimal model of the bulk theory is 
    \begin{equation}
        \begin{tikzcd}
            0\rar & \ker(\Delta-m^2)\rar["{\overset\circ\mu_1}=0"]& \operatorname{coker}(\Delta-m^2)\rar& 0.
        \end{tikzcd}
    \end{equation}
    The homotopy $h$ is given by a choice of Green's function\footnote{If one imposes Dirichlet boundary conditions the Green's function is unique.} $G_m$ for $\Delta-m^2$ (understood to carry $L_\infty$-degree $-1$). Then, the trivial embeddings of $\ker(\Delta-m^2)$ into $C^\infty(M)$ and the obvious projections $\textrm{id}-\mu_1\circ G_m:C^\infty(M)\rightarrow\ker(\Delta-m^2)$ and $\textrm{id}-G_m\circ\mu_1:C^\infty(M)\rightarrow\operatorname{coker}(\Delta-m^2)$ provide the homotopy retract data. The higher products $\overset\circ\mu_{n>1}$ then follow from the homological perturbation lemma, using the formulae of \Cref{ssec:minimal_model}. See also~\cite{Macrelli:2019afx} for more details.

    The boundary minimal model is trivially given by the boundary $L_\infty$-algebra itself, so the  cochain complex for the relative minimal model is given by
    \begin{equation}
        \begin{tikzcd}
            0\rar & \ker(\Delta-m^2)\rar["{\overset\circ\mu_1}=0"]\dar["{\overset\circ\pi}_1=\pi_1\big|_{\ker(\Delta-m^2)}"] & \operatorname{coker}(\Delta-m^2)\rar\dar["\overset\circ\pi_1=0"] & 0
            \\
            0\rar & C^\infty(\partial M)\oplus C^\infty(\partial M)\rar & 0
        \end{tikzcd}
    \end{equation}
    where the first row represents $H^\bullet(\LV)$ and the second row $H^\bullet(\LV_\partial)$, and there are no boundary higher products $\overset\circ\mu{}^\partial_n$ (since there are no derivative interactions in the bulk). The minimal model then encodes Feynman-diagram-like perturbation theory for solutions to the classical equations of motion.

    \subsubsection{Flat and anti-de~Sitter spaces}\label{ssec:ads}

    In this section, we discuss field theories on manifolds with a boundary `at infinity', namely on anti-de~Sitter space and flat space, to reproduce conformal-field-theory (CFT) correlators (computed via Witten diagrams) and $S$-matrices (computed via Feynman diagrams), respectively. Our discussion for the flat-space $S$-matrix will be inspired by the AdS/CFT correspondence; for a related approach see~\cite{kim2023smatrix}. In both cases, it is more convenient to motivate the construction in Euclidean (rather than Lorentzian) signature, although one can always Wick-rotate to Lorentzian structure afterwards. The idea that the trivial part of the $S$-matrix arises from boundary terms also appears in~\cite{caronhuot2023measured}.

    \paragraph{Euclidean anti-de~Sitter space.}
    Consider the Poincar{\'e} patch of Euclidean anti-de~Sitter space (a.k.a.~hyperbolic space) ${\rm AdS}_{d+1}$. The underlying manifold is ${\rm AdS}_{d+1}\coloneqq\{(z,\vec y)|z\in\mathbb R_{>0},\vec y\in\mathbb R^d\}$, and the Riemannian metric $g$ is given by 
    \begin{equation}
        g_\mathrm{AdS}\coloneqq\frac1{z^2}\left(\mathrm dz\otimes\mathrm dz+\sum_{i=1}^d\mathrm dy^i\otimes\mathrm dy^i\right).
    \end{equation}
    The conformal boundary $S^d\cong \mathbb{R}^d\cup \infty$ lies at $z=0$ and $z=\infty$.

    \paragraph{Euclidean flat space.}
    We treat $(d+1)$-dimensional Euclidean flat space similarly to the hyperbolic space: the underlying manifold is $\{(z,\vec y)|z\in\mathbb R_{>0},\vec y\in\mathbb R^d\}$, and the Riemannian metric $g$ is given by
    \begin{equation}
        g_\mathrm{E}\coloneqq\frac1{z^2}\mathrm dz\otimes\mathrm dz+\sum_{i=1}^d\mathrm dy^i\otimes\mathrm dy^i.
    \end{equation}
    Using the coordinate transformation $z\eqqcolon\mathrm e^t$, then $(t,\vec y)$ are the usual Cartesian coordinates for $\mathbb R^{1+d}$. Since this is in Euclidean signature, there is only one component of the conformal boundary (similar to hyperbolic space), which lies at $z=0$, corresponding to the far past $t\to-\infty$; there is no corresponding boundary component for the far future $t\to+\infty$. Effectively, we are using crossing symmetry to replace outgoing legs of positive energy with incoming legs of negative energy so that all external legs come in from the far past.

    To treat the hyperbolic and flat cases uniformly, we write
    \begin{equation}
        g\coloneqq\frac1{z^2}\mathrm dz\otimes\mathrm dz+\frac1{z^{2n}}\sum_{i=1}^d\mathrm dy^i\otimes\mathrm dy^i
    \end{equation}
    with $n=0$ for flat space and $n=1$ for hyperbolic space, and set $z\eqqcolon\mathrm e^t$.

    \paragraph{Bulk function space.}
    Let $C$ be the space of smooth functions $\phi\colon\mathbb R_{>0}\times\mathbb R^d\rightarrow\mathbb R$ such that there exists a function $\phi_\text{interior}\colon\mathbb R_{>0}\times\mathbb R^d\rightarrow\mathbb R$ that decays superpolynomially as $z\to0$ or, equivalently, superexponentially as $t\to-\infty$ (i.e.~$\lim_{z\to0}z^{-\alpha}\phi_\mathrm{interior}(z,\vec y)=0$ or $\lim_{t\to-\infty}\mathrm e^{-\alpha t}\phi_\mathrm{interior}(\mathrm e^t,\vec y)=0$ at every $\vec y\in\mathbb R^d$ and $\alpha\in\mathbb R$) and the difference $\phi_\mathrm{pw}\coloneqq\phi-\phi_\text{interior}$ is a countable\footnote{We allow countable, rather than finite, sums so as to include solutions to the Helmholtz equation on hyperbolic space.} sum of functions that depend homogeneously on $z$ as 
    \begin{equation}\label{eq:phi_decomposition}
        \phi(z,\vec y)=\phi_\text{interior}(z,\vec y)+\sum_{i=1}^\infty z^{\alpha_i}\phi_{\alpha_i}(\vec y),
    \end{equation}
    where $\phi_{\alpha_i}\in C^\infty_0(\mathbb R^d)$ are each a compactly supported smooth function and $\alpha_i\in\mathbb R$. Note that $C$ is defined so that it includes, in addition to functions that decay quickly at infinity, on-shell plane waves as well as their products and derivatives; it is not possible to  restrict to  only on-shell waves since they are not closed under products.

    On $C$, wherever convergent, we set
    \begin{equation}\label{eq:bulkBilinearFormAdS}
        \langle\phi,\phi'\rangle_C\coloneqq\int_0^\infty\mathrm dz\int_{\mathbb R^d}\mathrm d^d\vec y\,z^{-nd-1}\,\phi(z,\vec y)\phi'(z,\vec y),
    \end{equation}
    using the volume density $\sqrt{\det g}=z^{-nd-1}$.

    \paragraph{Boundary function space.}
    For each $\delta\in\mathbb R$, let $C_\partial^\delta$ be (a copy of) the function space
    \begin{equation}
        C_\partial^\delta\coloneqq
        \mathcal C^\infty(\mathbb R^d)
    \end{equation}
    of smooth functions on $\mathbb R^d$.
    This should be thought of as the space of the `values at $z=0$\ of `plane waves'  with `dispersion relation' characterised by $\delta$.
    Let $C_\partial$ be the space of formal sums
    \begin{equation}
        C_\partial = 
        \left\{\sum_{i=1}^\infty(\delta_i,\phi_i)\middle|\phi_i\in C_\partial^{\delta_i}
        \right\}
    \end{equation}
    such that the function
    \begin{equation}
        (z,\vec y)\mapsto\sum_{i=1}^\infty z^{\delta_i}\phi_i(\vec y)
    \end{equation}
    converges pointwise.

    Thus, for the flat space case, $C_\partial$ is the space of linear combinations of on-shell plane waves. The space $C_\partial$ is chosen such that it includes the asymptotic values (i.e.~`values at $z=0$') for solutions to the Helmholtz equation (for both the flat and hyperbolic cases) for different values of the squared mass $m^2$. For the Euclidean case, solutions to the Helmholtz equation are
    \begin{equation}
        z^{\pm\sqrt{m^2+\vec p^2}}\exp(\mathrm i\vec p\cdot\vec y),
    \end{equation}
    which we regard as an element of $C^{\sqrt{m^2+\vec p^2}}_\partial$,
    while for the hyperbolic case, solutions to the Helmholtz equation are of the form
    \begin{subequations}
        \begin{equation}\label{eq:AdS_Helmholtz_ansatz}
            \phi(z,\vec y)=z^{\delta_-}\phi_{\delta_-}(\vec y)+\dotsb+z^{\delta_+}\phi_{\delta_+}(\vec y)+\dotsb,
        \end{equation}
        where
        \begin{equation}\label{eq:conformal_dimension}
            \delta_\pm\coloneqq\tfrac d2\pm\sqrt{\tfrac{d^2}4+m^2}
        \end{equation}
    \end{subequations}
    and $\delta_+$ is the conformal dimension of the corresponding CFT local operator,\footnote{To avoid confusion with the Beltrami Laplacian, we denote the conformal dimension by $\delta_+$.} and $\phi_{\delta_+}$ is related to $\phi_{\delta_-}$ as
    \begin{equation}
        \phi_{\delta_+}(\vec y)\propto\int\mathrm d^d\vec y'\,\frac{\phi_{\delta_-}(\vec y')}{|\vec y-\vec y'|^{2\delta_+}}.
    \end{equation}
    We regard this as the formal sum
    \begin{equation}
        (\delta_-,\phi_{\delta_-})+\cdots+(\delta_+,\phi_{\delta_+})+\cdots\in C_\partial.
    \end{equation}

    Furthermore, on $C_\partial$, wherever convergent, we set
    \begin{equation}\label{eq:boundaryBilinearFormAdS}
        \langle f,g\rangle_{C_\partial}\coloneqq\int_{\mathbb R^d}\mathrm d^d\vec y\,\sum_{\delta\in\mathbb R}f_\delta(\vec y)g_{1+nd-\delta}(\vec y).
    \end{equation}
    We also have the map
    \begin{equation}
        \begin{aligned}
            (-)_\mathrm{pw}\colon C & \rightarrow C_\partial,
            \\
            \left(\phi_\mathrm{interior}+\sum_iz^{\delta_i}\phi_i\right)& \mapsto \sum_i(\delta_i,\phi_i),
        \end{aligned}
    \end{equation}
    where $\phi_\mathrm{interior}$ was defined in \eqref{eq:phi_decomposition}, which picks out the asymptotic components of $\phi\in C$.

    \paragraph{\mathversion{bold}Relative $L_\infty$-algebra.}
    Inspired by our discussion from \Cref{ssec:scalar_field}, the relative $L_\infty$-algebra for a scalar field $\phi$ of mass $m$ with cubic interaction is
    \begin{subequations}\label{eq:relativeLinftyFlatAdS}
        \begin{equation}
            \begin{tikzcd}
                0\rar & C\rar["\mu_1"]\dar["\pi_1"] & C\rar\dar["0"] & 0 
                \\
                0\rar & C_\partial\rar & 0
            \end{tikzcd}
        \end{equation}
        with 
        \begin{equation}
            \begin{gathered}
                \mu_1(\phi)\coloneqq(\Delta-m^2)\phi,
                \quad
                \mu_2(\phi,\phi')\coloneqq-\lambda\phi\phi',
                \\
                \pi_1(\phi)\coloneqq\phi_\text{pw},
            \end{gathered}
        \end{equation}
        where the Beltrami Laplacian is
        \begin{equation}
            \Delta\phi\coloneqq-z^{1+nd}\left(\partial_z(z^{1-nd}\partial_z\phi)+\partial_{\vec y}(z^{-1-nd}z^{2n}\partial_{\vec y}\phi)\right)
        \end{equation}
        and where $c$ is a constant; it will eventually be fixed by requiring that the coefficient for the quadratic term in~\eqref{eq:flat-space-minimal-MC-action} or~\eqref{eq:AdS-minimal-MC-action} below is correctly normalised (when compared to the trivial part of the $S$-matrix or the CFT two-point function). Furthermore,
        \begin{equation}
            \langle\phi,\phi'^+\rangle_\LV\coloneqq\langle\phi,\phi'^+\rangle_C\eqqcolon\langle\phi'^+,\phi\rangle_\LV
            \quad\mbox{and}\quad
            \left\langle\alpha,\alpha'
            \right\rangle_{\LV_\partial}
            \coloneqq\langle\alpha,\beta'\rangle_{C_\partial},
        \end{equation}
    \end{subequations}
    where $\langle-,-\rangle_C$ and $\langle-,-\rangle_{C_\partial}$ were given in~\eqref{eq:bulkBilinearFormAdS} and~\eqref{eq:boundaryBilinearFormAdS}, respectively.

    Using these ingredients, the relative homotopy Maurer--Cartan action~\eqref{eq:relative_MC_action} becomes
    \begin{equation}
        S=\int_0^\infty\frac{\mathrm dz}{z^{1+nd}}\int_{\mathbb R^d}\mathrm d^d\vec y\Big\{\tfrac12\phi(\Delta-m^2)\phi-\tfrac1{3!}\lambda\phi^3\Big\}+c\int_{\mathbb R^d}\mathrm d^d\vec y\,\Big[(z^{-1-nd}\phi(z,\vec y)(\partial_z\phi)(z,\vec y)\Big]_0,
    \end{equation}
    where $[\cdots]_0$ extracts the component that is of order $\mathcal O(z^0)$. This can be recognised as a regularised version of the standard action
    \begin{equation}
        S_\text{naive}=-\int_0^\infty\frac{\mathrm dz}{z^{1+nd}}\int_{\mathbb R^d}\mathrm d^d\vec y\Big\{\tfrac12(\partial\phi)^2+\tfrac12 m^2\phi^2+\tfrac1{3!}\lambda\phi^3\Big\}.
    \end{equation}

    \paragraph{Deformation retract.}
    We have the deformation retract $(i,p,h)$ whose components are 
    \begin{equation}
        \begin{tikzcd}[sep=huge]
            C \ar[rr, shift left, "p" near end]\drar[shift left, "\Delta-m^2"]\ar[dd,"{(-)_\mathrm{pw}}"'] && \mathcal C^\infty(\mathbb R^d)\oplus\mathcal C^\infty(\mathbb R^d)\drar["0"]\ar[dd,"{(f_+,f_-)\mapsto(i(f_+,f_-))_\mathrm{pw}}" near end] \ar[ll, shift left, "i" near start]
            \\
            & C \ular[shift left, "(\Delta-m^2)^{-1}"] \ar[rr, shift left, "p" near start]\ar[dd]&&\mathcal C^\infty(\mathbb R^d)\oplus\mathcal C^\infty(\mathbb R^d)\ar[dd] \ar[ll, shift left, "i" near end]
            \\
            C_\partial \ar[rr, shift left, "\mathrm{id}"]\drar[shift left] && C_\partial\drar\ar[ll, shift left, "\mathrm{id}"]
            \\
            & 0 \ar[rr, shift left] \ular[shift left] &&0. \ar[ll, shift left]
            \\
        \end{tikzcd}
    \end{equation}
    Here, $i$ is given (for both fields and antifields) as
    \begin{equation}
        i\,:\,(f_+,f_-)\mapsto\int\mathrm d^d\vec y\,(f_+(\vec y)K_+(-,-;\vec y)+f_-(\vec y)K_-(-,-;\vec y))
    \end{equation}
    in terms of the bulk-to-boundary propagator
    \begin{equation}
        K_\pm(\vec y,z;\vec y')\coloneqq
        \begin{cases}
            \int\mathrm d^d\vec p\,z^{\delta_\pm(\vec p)}\exp(\mathrm i(\vec y-\vec y')\cdot\vec p)&n=0\\
            \frac{\Gamma(\delta_\pm)}{\pi^{d/2}\Gamma(\delta_\mp)}\left(\frac z{z^2+(\vec y-\vec y')^2}\right)^{\delta_\pm}&n=1,
        \end{cases}
    \end{equation}
    where
    \begin{equation}
        \delta_\pm(\vec p)\coloneqq
        \begin{cases}
            \pm\sqrt{m^2+\vec p^2} & \text{for}\quad n=0
            \\
            \frac d2\pm\sqrt{\frac{d^2}4+m^2} & \text{for}\quad n=1
        \end{cases}
    \end{equation}
    is either the on-shell energy (and its negative) for flat space or the conformal dimension (and its conjugate) of the corresponding CFT operator for hyperbolic space, and where
    \begin{equation}
        m^2\geq
        \begin{cases}
            0 & \text{for}\quad n=0
            \\
            -\frac{d^2}4 & \text{for}\quad n=1
        \end{cases}
    \end{equation}
    is the squared mass of the particle, which obeys the Breitenlohner--Freedman bound \cite{Breitenlohner:1982bm,Breitenlohner:1982jf} in the case of $n=1$ and is non-negative for $n=0$.     
    For the flat space case, the expression is perhaps clearer in momentum space:
    \begin{equation}
        \hat K_\pm(\vec y,z;\vec p)
        =z^{\delta_\pm(\vec p)}\exp(\mathrm i\vec y\cdot\vec p)
        =\exp\left(\delta_\pm(\vec p)t+\mathrm i\vec y\cdot\vec p\right),
    \end{equation}
    which is the Wick-rotated version of an on-shell plane wave.
    
    The projection map $p$ is given in terms of a suitable left inverse of $i$.

    \paragraph{Minimal model for flat space.}
    Let us Fourier-transform $\vec y$ into $\vec p$. Then, solutions to the equations of motion are linear combinations of plane waves of the form
    \begin{subequations}\label{planebasis}
        \begin{equation}
            \hat\phi_\pm(z,\vec p)=z^{\pm E_{\vec p}}\hat\phi^\pm_{E_{\vec p}}(\vec p)
        \end{equation}
        where
        \begin{equation}
            E_{\vec p}\coloneqq\sqrt{m^2+\vec p^2}
        \end{equation}
    \end{subequations}
    is the mass-shell condition.
    
    The cochain complex underlying the minimal model for~\eqref{eq:relativeLinftyFlatAdS} is then
    \begin{equation}
        \begin{tikzcd}
            0\rar&\mathcal C^\infty(\mathbb R^d)\oplus\mathcal C^\infty(\mathbb R^d)\rar["{\overset\circ\mu}_1=0"]\dar["{(f_+,f_-)\mapsto (i(f_+,f_-))_\mathrm{pw}}"']&\mathcal C^\infty(\mathbb R^d)\oplus\mathcal C^\infty(\mathbb R^d)\rar\dar["0"] & 0
            \\
            0\rar & C_\partial\oplus C_\partial\rar & 0
        \end{tikzcd}
    \end{equation}
    where all the $\overset\circ\mu_{n>1}$ are present and are given by the homological perturbation lemma as discussed in \Cref{ssec:minimal_model}. Furthermore, the relative homotopy Maurer--Cartan action~\eqref{eq:relative_MC_action} for the minimal model is
    \begin{equation}\label{eq:flat-space-minimal-MC-action}
        S=\frac12\int_{\mathbb R^d}\mathrm d^d\vec p\,E_{\vec p}\,\hat\phi^+_{E_{\vec p}}(\vec p)\hat\phi^-_{E_{\vec p}}(-\vec p)-\int_0^\infty\frac{\mathrm dz}{z}\int_{\mathbb R^d}\mathrm d^d\vec y\frac1{3!}\lambda\phi^3(z,\vec y)+\cdots,
    \end{equation}
    where $\phi(z,\vec y)$ is a linear combination of the Fourier-transform of~\eqref{planebasis} and the ellipsis denotes higher-order scattering amplitudes via the higher-order products $\overset\circ\mu_{n>1}$. In particular, the two-point scattering amplitude is seen to reproduce the (Euclidean) Klein--Gordon metric (see~e.g.~\cite[(4.9)]{Jacobson:2003vx}), correctly pairing positive-energy (incoming) and negative-energy (outgoing) states.

    \begin{table}[h]
        \vspace{15pt}
        \begin{center}
            \begin{tabular}{cc}
                \toprule
                Witten Diagrams & Homotopy Transfer
                \\
                \midrule
                bulk--bulk propagator & homotopy $h$
                \\
                bulk--boundary propagator & $i$ and $p$
                \\
                boundary--boundary propagator & $\langle\pi_1(-),\pi_1(-)\rangle$
                \\
                \bottomrule
            \end{tabular}
        \end{center}
        \caption{Correspondence between Witten diagrams and homotopy algebras.}\label{table:WittenDiagramHomotopyTransfer}
    \end{table}
    
    \paragraph{Minimal model for anti-de~Sitter space.}
    Let us take the minimal model for anti-de~Sitter space, that is, for $n=1$. In this case, solutions to the Helmholtz equation follow the ansatz \eqref{eq:AdS_Helmholtz_ansatz}.

    As before, the cochain complex underlying the minimal model for~\eqref{eq:relativeLinftyFlatAdS} is
    \begin{equation}
        \begin{tikzcd}
            0\rar&\mathcal C^\infty(\mathbb R^d)\oplus\mathcal C^\infty(\mathbb R^d)\rar["{\overset\circ\mu}_1=0"]\dar["{(f_+,f_-)\mapsto (i(f_+,f_-))_\mathrm{pw}}"']&\mathcal C^\infty(\mathbb R^d)\oplus\mathcal C^\infty(\mathbb R^d)\rar\dar["0"] & 0
            \\
            0\rar & C_\partial\oplus C_\partial\rar & 0
        \end{tikzcd}
    \end{equation}
    with $\overset\circ\mu_{n>1}$ given by the homological perturbation lemma as discussed in \Cref{ssec:minimal_model}. Furthermore, the relative homotopy Maurer--Cartan action~\eqref{eq:relative_MC_action} for the minimal model is
    \begin{equation}\label{eq:AdS-minimal-MC-action}
        S=\frac12\int_{\mathbb R^{2d}}\mathrm d\vec y\,\mathrm d\vec y'\,\frac{\phi_{\delta_+}(\vec y)\phi_{\delta_+}(\vec y')}{|\vec y-\vec y'|^{2\delta_+}}-\int_0^\infty\frac{\mathrm dz}{z^{1+d}}\int_{\mathbb R^d}\mathrm d^d\vec y\frac1{3!}\lambda\phi^3(z,\vec y)+\cdots
    \end{equation}
    where the ellipsis encodes the higher-order $\overset\circ\mu_{n>1}$ generated by homotopy transfer, corresponding to evaluating the sum over connected Witten diagrams; these encode the connected $(n+1)$-point correlation functions of the boundary CFT. Then, it is clear that the corresponding scattering amplitudes reproduce those of Witten diagrams:
    \begin{subequations}
        \begin{equation}
            \begin{tikzpicture}[
                scale=0.4,
                baseline={([yshift=-0.5ex]current bounding box.center)}
            ]
                \draw[color=black!60, fill=white!5, very thick](0,0) circle (2);
                \draw (0,0) -- (canvas polar cs:radius=2cm,angle=30);
                \draw (0,0) -- (canvas polar cs:radius=2cm,angle=150);
                \draw (0,0) -- (canvas polar cs:radius=2cm,angle=270);
                \draw[fill=black] (0,0) circle (0.1);
            \end{tikzpicture}
            =
            \begin{tikzpicture}[
                scale=1.1,
                baseline={([yshift=-0.5ex]current bounding box.center)}
            ]
                \begin{feynhand}
                    \vertex (a) at (-1,-1) {$i$};
                    \vertex (b) at (1,-1) {$i$};
                    \vertex (c) at (0,1) {$p$};
                    \vertex [ringdot] (o) at (0,0) {$\mu_2$};
                    \propag (a) to (o);
                    \propag (b) to (o);
                    \propag (c) to (o);
                \end{feynhand}
            \end{tikzpicture}
        \end{equation}
        \begin{multline}
            \begin{tikzpicture}[
                scale=0.4,
                baseline={([yshift=-0.5ex]current bounding box.center)}
            ]
                \draw[color=black!60, fill=white!5, very thick](0,0) circle (2);
                \draw (0,0) -- (canvas polar cs:radius=2cm,angle=45);
                \draw (0,0) -- (canvas polar cs:radius=2cm,angle=135);
                \draw (0,0) -- (canvas polar cs:radius=2cm,angle=-135);
                \draw (0,0) -- (canvas polar cs:radius=2cm,angle=-45);
                \draw[fill=black] (0,0) circle (0.1);
            \end{tikzpicture}
            +
            \begin{tikzpicture}[
                scale=0.4,
                baseline={([yshift=-0.5ex]current bounding box.center)}
            ]
                \draw[color=black!60, fill=white!5, very thick](0,0) circle (2);
                \draw (0.5,0) -- (canvas polar cs:radius=2cm,angle=45);
                \draw (-0.5,0) -- (canvas polar cs:radius=2cm,angle=135);
                \draw (-0.5,0) -- (canvas polar cs:radius=2cm,angle=-135);
                \draw (0.5,0) -- (canvas polar cs:radius=2cm,angle=-45);
                \draw (-0.5,0) -- (0.5,0);
                \draw[fill=black] (0.5,0) circle (0.1);
                \draw[fill=black] (-0.5,0) circle (0.1);
            \end{tikzpicture}
            +
            \begin{tikzpicture}[
                scale=0.4,
                baseline={([yshift=-0.5ex]current bounding box.center)}
            ]
                \draw[color=black!60, fill=white!5, very thick](0,0) circle (2);
                \draw (0,0.5) -- (canvas polar cs:radius=2cm,angle=45);
                \draw (0,0.5) -- (canvas polar cs:radius=2cm,angle=135);
                \draw (0,-0.5) -- (canvas polar cs:radius=2cm,angle=-135);
                \draw (0,-0.5) -- (canvas polar cs:radius=2cm,angle=-45);
                \draw (0,0.5) -- (0,-0.5);
                \draw[fill=black] (0,0.5) circle (0.1);
                \draw[fill=black] (0,-0.5) circle (0.1);
            \end{tikzpicture}
            +
            \begin{tikzpicture}[
                scale=0.4,
                baseline={([yshift=-0.5ex]current bounding box.center)}
            ]
                \draw[color=black!60, fill=white!5, very thick](0,0) circle (2);
                \draw (0,0.5) -- (canvas polar cs:radius=2cm,angle=45);
                \draw (0,-0.5) -- (canvas polar cs:radius=2cm,angle=135);
                \draw (0,0.5) -- (canvas polar cs:radius=2cm,angle=-135);
                \draw (0,-0.5) -- (canvas polar cs:radius=2cm,angle=-45);
                \draw (0,0.5) -- (0,-0.5);
                \draw[fill=black] (0,0.5) circle (0.1);
                \draw[fill=black] (0,-0.5) circle (0.1);
            \end{tikzpicture}
            \\=
            \begin{tikzpicture}[
                scale=1.15,
                baseline={([yshift=-0.5ex]current bounding box.center)}
            ]
                \begin{feynhand}
                    \vertex (a) at (-1,-1) {$i$};
                    \vertex (b) at (1,-1) {$i$};
                    \vertex (c) at (0,-1) {$i$};
                    \vertex (out) at (0,1) {$p$};
                    \vertex [ringdot] (centre) at (0,0) {$\mu_3$};
                    \propag (a) to (centre);
                    \propag (b) to (centre);
                    \propag (c) to (centre);
                    \propag (centre) to (out);
                \end{feynhand}
            \end{tikzpicture}
            +
            \begin{tikzpicture}[
                scale=1.15,
                baseline={([yshift=-0.5ex]current bounding box.center)}
            ]
                \begin{feynhand}
                    \vertex (a) at (-1,-1) {$i$};
                    \vertex (b) at (0,-1) {$i$};
                    \vertex (c) at (1,-1) {$i$};
                    \vertex (out) at (0,1) {$p$};
                    \vertex [ringdot] (centre) at (0,0.3) {$\mu_2$};
                    \vertex [ringdot] (centre2) at (-0.3,-0.3) {$\mu_2$};
                    \propag (a) to (centre2);
                    \propag (b) to (centre2);
                    \propag (centre) to [edge label'={$h$}] (centre2);
                    \propag (c) to (centre);
                    \propag (centre) to (out);
                \end{feynhand}
            \end{tikzpicture}
            \\
            +
            \begin{tikzpicture}[
                scale=1.15,
                baseline={([yshift=-0.5ex]current bounding box.center)}
            ]
                \begin{feynhand}
                    \vertex (a) at (-1,-1) {$i$};
                    \vertex (b) at (0,-1) {$i$};
                    \vertex (c) at (1,-1) {$i$};
                    \vertex (out) at (0,1) {$p$};
                    \vertex [ringdot] (centre) at (0,0.3) {$\mu_2$};
                    \vertex [ringdot] (centre2) at (0.3,-0.3) {$\mu_2$};
                    \propag (a) to (centre);
                    \propag (b) to (centre2);
                    \propag (centre) to [edge label={$h$}] (centre2);
                    \propag (c) to (centre2);
                    \propag (centre) to (out);
                \end{feynhand}
            \end{tikzpicture}
            +
            \begin{tikzpicture}[
                scale=1.15,
                baseline={([yshift=-0.5ex]current bounding box.center)}
            ]
                \begin{feynhand}
                    \vertex (a) at (-1,-1) {$i$};
                    \vertex (b) at (0,-1) {$i$};
                    \vertex (c) at (1,-1) {$i$};
                    \vertex (out) at (0,1) {$p$};
                    \vertex [ringdot] (centre) at (0.3,0.3) {$\mu_2$};
                    \vertex [ringdot] (centre2) at (-0.3,-0.3) {$\mu_2$};
                    \propag (a) to (centre2);
                    \propag (b) to (centre);
                    \propag (centre) to [edge label'={$h$}](centre2);
                    \propag (c) to (centre2);
                    \propag (centre) to (out);
                \end{feynhand}
            \end{tikzpicture}
        \end{multline}
    \end{subequations}
    and so on, where in the homological perturbation lemma we have the correspondence in \Cref{table:WittenDiagramHomotopyTransfer}.
    
    The CFT two-point correlator is \emph{not} given by the usual homotopy transfer but is instead given by $\langle\pi_1(-),\pi_1(-)\rangle$, that is, from the boundary term of the relative Maurer--Cartan action; this is essentially the classic derivation~\cite[§2.4,§2.5]{Witten:1998qj} that involves integration by parts of the bulk AdS action to reduce it to a boundary term.\footnote{In~\cite{Witten:1998qj} this computation is done with a bulk-to-boundary propagator, which provides the required regularisation; the choice of the relative $L_\infty$-algebra here encodes an equivalent choice of regulator.} This seeming inhomogeneity is, in fact, natural: the two-point correlator is \emph{not} part of the connected correlator, since the full correlator is obtained by `exponentiating' the connected correlator, whose `identity' component consists purely of Wick contractions involving the two-point correlator (and other Wick contractions).
    
    In this sense, the leftmost diagram in~\cite[Figure~1]{Freedman:1998tz}, 
    \begin{equation}
     \begin{tikzpicture}[
                scale=0.5,
                baseline={([yshift=-0.5ex]current bounding box.center)}
            ]
                \draw[color=black!60, fill=white!5, very thick](0,0) circle (2);
                \draw (0,0) -- (canvas polar cs:radius=2cm,angle=0);
                \draw (0,0) -- (canvas polar cs:radius=2cm,angle=180);
                \draw[fill=black] (0,0) circle (0.1);
            \end{tikzpicture},
        \end{equation}
         which is often used to represent the two-point function,
    \begin{equation}
            \langle O(\vec y_1), O(\vec y_2)\rangle_\mathrm{CFT}
            =
            \int_\mathrm{AdS}\mathrm d\vec y\frac{\mathrm dz}{z^{1+d}}\Big\{\partial K(\vec y,z;\vec y_1)\cdot \partial K(\vec y,z;\vec y_2)+m^2K(\vec y,z;\vec y_1) K(\vec y,z,\vec y_2)\Big\},
    \end{equation}
    is misleading in that the `vertex', which looks like it should be $\mu_1$, in fact is not (since we have picked up a boundary term); it can, instead, be interpreted as the AdS boundary-to-boundary propagator.

    \subsection{Gauge theory on a manifold-with-boundary}

    Here, we consider the relative $L_\infty$-algebras for Chern-Simons and Yang--Mills theory on  oriented compact Riemannian manifolds  with boundary. See also~\cite{chiaffrino2023holography} for an $L_\infty$-algebra approach to Yang--Mills theory on a manifold with boundary .  

    \subsubsection{Chern--Simons theory}

    \paragraph{BV action.}
    Consider an ordinary finite-dimensional metric Lie algebra $(\LV,[-,-]_\LV,\langle-,-\rangle_\LV)$. The na\"ive\footnote{In the sense that it is derived directly from the canonical symplectic $Q$-manifold $(\Omega^\bullet(M,\LV)[1],Q,\omega)$. See~\cite{Jurco:2018sby} for a detailed discussion of its $L_\infty$-algebra realisation.} homotopy Maurer--Cartan Chern--Simons BV action on an oriented compact Riemannian manifold $(M,g)$ with boundary $\partial M$ is given by 
    \begin{equation}\label{eq:CSaction}
        \begin{aligned}
            S^\mathrm{CS}_\mathrm{hMC}&\coloneqq\int_M\Big\{\tfrac12\langle A,\mathrm d_MA\rangle_\LV+\tfrac{1}{3!}\langle A,[A,A]_\LV\rangle_\LV
            \\
            &\kern1.5cm-\tfrac12\langle A^+,\mathrm d_Mc\rangle_\LV-\tfrac12\langle c,\mathrm d_MA^+\rangle_\LV-\langle A^+,[A,c]_\LV\rangle_\LV+\tfrac12\langle c^+,[c,c]_\LV\rangle_\LV\Big\},
        \end{aligned}
    \end{equation}
    where $c\in\Omega^0(M,V)$ is the ghost and $A\in\Omega^1(M,V)$ the Chern--Simons gauge potential and $A^+\in\Omega^2(M,V)$ and $c^+\in\Omega^3(M,V)$ the corresponding anti-fields. This action is invariant under the corresponding BV transformations,    
    \begin{equation}
        \begin{aligned}
            Q_\mathrm{BV}c&\coloneqq-\tfrac12[c,c]_\LV,
            \\
            Q_\mathrm{BV}A&\coloneqq\nabla_Mc,
            \\
            Q_\mathrm{BV}A^+&\coloneqq-F-[c,A^+]_\LV,
            \\
            Q_\mathrm{BV}c^+&\coloneqq\nabla_MA^+-[c,c^+]_\LV,
        \end{aligned}
    \end{equation}
    where $\nabla_M\coloneqq\mathrm d_M+[A,-]_\LV$ and $F\coloneqq\mathrm d_MA+\frac12[A,A]_\LV$, up to a boundary term which is given by 
    \begin{equation}
        Q_\mathrm{BV}S^\mathrm{CS}_\mathrm{hMC}=\int_{\partial M}\Big\{\langle c,\mathrm d_MA\rangle_\LV+\tfrac14\langle[c,c]_\LV,A^+\rangle_\LV+\tfrac14\langle c,[A,A]_\LV\rangle_\LV\Big\}.
    \end{equation}

    Note that one can integrate by parts to write~\eqref{eq:CSaction} as 
    \begin{subequations}
        \begin{equation}
            S^\mathrm{CS}_\mathrm{hMC}=S^\mathrm{CS}_\mathrm{BV}-\tfrac12\int_{\partial M}\langle c,A^+\rangle_\LV,
        \end{equation}
        where 
        \begin{equation}\label{eq:covCS}
            S^\mathrm{CS}_\mathrm{BV}=\int_M\Big\{\tfrac12\langle A,\mathrm d_MA\rangle_\LV+\tfrac{1}{3!}\langle A,[A,A]_\LV\rangle_\LV-\langle A^+,\nabla_Mc\rangle_\LV+\tfrac12\langle c^{+},[c,c]_\LV\rangle_\LV\Big\}.
        \end{equation}
    \end{subequations}
    Note also that the inner product used to define the action is only cyclic up to a boundary term, 
    \begin{equation}
        \int_M\langle A_1,\mathrm d_MA_2\rangle_\LV=\int_M\langle A_2,\mathrm d_MA_1\rangle_\LV-\int_{\partial M}\langle A_1,A_2\rangle_\LV,
    \end{equation}
    for all $A_{1,2}\in\Omega^1(M,V)$. This must be rectified by a boundary term deriving from the relative $L_\infty$-algebra. However, this additional term drops out of the relative homotopy Maurer--Cartan action, since $\int_{\partial M}\langle A,A\rangle_\LV=0$, so it is only visible at the level of the relative $L_\infty$-algebra.
    
    \paragraph{\mathversion{bold}Relative $L_\infty$-algebra.}
    The cochain complex of the relative $L_\infty$-algebra is given by $\LV$-valued $p$-forms on $M$ and $\partial M$,
    \begin{subequations}
        \begin{equation}
            \begin{tikzcd}
                \overbrace{\Omega^0(M,\LV)}^{\in\,c}\rar["\mu_1"]\dar["\pi_1"] & \overbrace{\Omega^1(M,\LV)}^{\in\,A}\rar["\mu_1"]\dar["\pi_1"] & \overbrace{\Omega^2(M,\LV)}^{\in\,A^+}\rar["\mu_1"]\dar["\pi_1"] & \overbrace{\Omega^3(M,\LV)}^{\in\,c^+} 
                \\
              \underbrace{\Omega^0(\partial M,\LV)}_{\in\,\gamma}  \rar["\mu_1^\partial"] & \underbrace{\Omega^1(\partial M,\LV)}_{\in\,\alpha}\rar["\mu_1^\partial"] & \underbrace{\Omega^2(\partial M,\LV)}_{\in\,\alpha^+} 
            \end{tikzcd}
        \end{equation}
        with 
        \begin{equation}
            \begin{gathered}
                \mu_1(c)\coloneqq\mathrm d_Mc,
                \quad
                \mu_1(A)\coloneqq\mathrm d_MA,
                \quad
                \mu_1(A^+)\coloneqq\mathrm d_MA^+,
                \\
                \mu_2(c,c')\coloneqq[c,c']_\LV,
                \quad
                \mu_2(c,A)\coloneqq[c,A]_\LV,
                \quad
                \mu_2(c,A^+)\coloneqq[c,A^+]_\LV,
                \quad
                \mu_2(c,c'^+)\coloneqq[c,c'^+]_\LV,
                \\
                \mu_2(A,A')\coloneqq [A,A']_\LV,
                \quad
                \mu_2(A,A'^+)\coloneqq[A,A'^+]_\LV.
            \end{gathered}
        \end{equation}
        and  
        \begin{equation}
            \pi_1(c)\coloneqq c|_{\partial M}, 
            \quad
            \pi_1(A)\coloneqq A|_{\partial M},
            \quad
            \pi_1(A^+)\coloneqq A^+|_{\partial M}
        \end{equation}
        and
        \begin{equation}
            \begin{gathered}
                \mu^\partial_1(\gamma)\coloneqq\mathrm d_{\partial M}\gamma,
                \quad
                \mu^\partial_1(\alpha)\coloneqq\mathrm d_{\partial M}\alpha,
                \quad
                \mu^\partial_1(\alpha^+)\coloneqq\mathrm d_{\partial M}\alpha^+,
                \\
                \mu^\partial_2(\gamma,\gamma')\coloneqq[\gamma,\gamma']_\LV,
                \quad
                \mu^\partial_2(\gamma,\alpha)\coloneqq[\gamma,\alpha]_\LV,
                \quad
                \mu^\partial_2(\gamma,\alpha^+)\coloneqq[\gamma,\alpha^+]_\LV,
                \\
                \mu^\partial_2(\alpha,\alpha')\coloneqq [\alpha,\alpha']_\LV.
            \end{gathered}
        \end{equation}        
        In addition, we introduce the degree $-2$ (before grade-shiting) bilinear form has non-vanishing components
        \begin{equation}
            \begin{gathered}
                \langle\gamma,\alpha^+\rangle_{\partial}\coloneqq\int_{\partial M}\langle\gamma,\alpha^+\rangle_\LV
                \quad\mbox{and}\quad
                \left\langle
                    \alpha, 
                    \alpha'
                \right\rangle_{\partial}
                \coloneqq\int_{\partial M}\langle\alpha,\alpha'\rangle_\LV.
            \end{gathered}
        \end{equation}
    \end{subequations}
    
    With these definitions, the relative homotopy Maurer--Cartan action~\eqref{eq:relative_MC_action} becomes
    \begin{equation}
        \begin{aligned}
            S^\mathrm{CS}_\mathrm{rhMC}&=\int_M\Big\{\tfrac12\langle A,\mathrm d_MA\rangle_\LV+\tfrac{1}{3!}\langle A,[A,A]_\LV\rangle_\LV
            \\
            &\kern1cm-\tfrac12\langle A^+,\mathrm d_Mc\rangle_\LV-\tfrac12\langle c,\mathrm d_MA^+\rangle_\LV-\langle A^+,[A,c]_\LV\rangle_\LV+\tfrac12\langle c^+,[c,c]_\LV\rangle_\LV\Big\}\Big|_{\partial M}
            \\
            &\kern1cm-\tfrac12\int_{\partial M}\Big\{\langle c,A^+\rangle_\LV-\langle A,A\rangle_\LV\Big\}
            \\
            &=\int_M\Big\{\tfrac12\langle A,\mathrm d_MA\rangle_\LV+\tfrac{1}{3!}\langle A,[A,A]_\LV\rangle_\LV\langle A^+,\nabla_Mc\rangle_\LV+\tfrac12\langle c^+,[c,c]_\LV\rangle_\LV\Big\}.
        \end{aligned}
    \end{equation}
   
    \paragraph{\mathversion{bold}Relative minimal model.}
    To construct the relative minimal model, we first consider the cohomology of the bulk cochain complex, $(\Omega^\bullet(M,\LV),\mathrm d_M)$. By the K\"unneth formula 
    \begin{equation}
        H_{\mu_1}^\bullet(\Omega^{\bullet}(M,\LV)\cong\bigoplus_{p=0}^{3}H_\mathrm{dR}^p(M;\mathbb{R})\otimes\LV,
    \end{equation}
    where $H_\mathrm{dR}^\bullet(M;\mathbb{R})$ is the de Rham cohomology and for $M$ an oriented compact Riemannian manifold with boundary $\partial M$ (and possibly corners) we have $H_\mathrm{dR}^\bullet(M;\mathbb{R})\cong H^\bullet(M;\mathbb{R})$, the real singular cohomology of $M$~\cite{SAMELSON1967427}. Thus, the minimal model absent interactions (higher $L_\infty$-brackets) computes the real singular cohomology of $M$. 

    To identify $H_\mathrm{dR}^\bullet(M;\mathbb{R})$ explicitly, denote the closed, exact, co-closed, co-exact forms and their intersections by (leaving $M$ implicit)
    \begin{equation}
        \begin{aligned}
            C^p&\coloneqq\{\omega\in\Omega^p(M)\mid\mathrm d_M\omega=0\},
            \\
            cC^p&\coloneqq\{\omega\in\Omega^p(M)\mid\mathrm d_M^\dagger\omega=0\},
            \\
            E^p&\coloneqq\{\omega\in\Omega^p(M)\mid\omega=\mathrm d_M\eta\},
            \\
            cE^p&\coloneqq\{\omega\in\Omega^p(M)\mid\omega=\mathrm d_M^\dagger\eta\},
            \\
            CcC^p&\coloneqq C^p\cap cC^p,
            \\
            EcC^p&\coloneqq E^p\cap cC^p\subseteq CcC^p,
            \\
            CcE^p&\coloneqq C^p\cap cE^p \subseteq CcC^p.
        \end{aligned}
    \end{equation}

    We shall also need to impose Dirichlet (relative) $D$ and Neumann (absolute) $N$ boundary conditions on the space of $p$-forms. To do so, it is convenient to introduce local coordinates $x=(y,r)$ on $M$ near $\partial M$, where $y$ are local coordinates on $\partial M$ and  $r\geq 0$ is the normal distance to the boundary so that for $p\in \partial M$ we have $r(p)=0$. We denote forms at a boundary point $p$ by $\omega|_{p\in \partial M}$\footnote{Not to be confused with the restriction given by the pull-back of the inclusion, denoted by $\omega|_{\partial M}\coloneqq\iota^* \omega$.}, which decomposes into tangential and normal components,
    \begin{equation}
        \omega|_{p\in \partial M}=\omega_p^{\|}+\omega_p^\perp.
    \end{equation}
    In  local coordinates, this decomposition can be written
    \begin{equation}
        \omega|_{p\in \partial M}=\alpha_p^{\|}+\alpha_p^\perp\wedge\mathrm{d}r
        \quad\mbox{and}\quad
        \alpha_p^{\|},\alpha_p^\perp\in\Omega^\bullet(\partial M).
    \end{equation}
    In a coordinate-free language, $\omega_p^{\|}$ is  given by $\omega_p^{\|}(X_1,\ldots X_p)\coloneqq\omega(X_1^{\|},\ldots X_p^{\|})$ for all $X_1,\ldots X_p\in \Gamma(M|_{\partial M},TM)$ and $X=X^{\perp}+X^{\|}$ denotes the decomposition into tangential and normal parts. The normal component is then defined by $\omega_p^\perp\coloneqq\omega_p^{\|}-\omega|_{p\in \partial M}$. For notational clarity we will henceforth write $\omega^{\|}$ and $\omega^\perp$ for the tangential and normal components at any boundary point.

    The tangential and norm components define the $D$ (relative) and $N$ (absolute) boundary conditions,
    \begin{equation}
        \Omega_D^p(M)\coloneqq\{\omega\in\Omega^p(M)\mid\omega^{\|}=0\}
        \quad\mbox{and}\quad
        \Omega_N^p(M)\coloneqq\{\omega\in\Omega^p(M)\mid\omega^\perp=0\},
    \end{equation}
    where for (co-)exact forms the boundary conditions are applied to the pre-images
    \begin{equation}
        cE_N^p\coloneqq\mathrm d_M^\dagger\Omega_N^{p+1}(M)
        \quad\mbox{and}\quad
        E_D^p\coloneqq\mathrm d_M\Omega_D^{p-1}(M).
    \end{equation}

    With these conventions, the Hodge decomposition for an oriented compact, connected, smooth Riemannian manifold with boundary is~\cite{Morrey:1956}
    \begin{equation}\label{HD}
        \Omega^p(M)\cong cE_N^p\oplus CcC^p\oplus E_D^p\cong c E_N^p\oplus CcC_N^p\oplus EcC^p\oplus E_D^p.
    \end{equation}
    Since $EcC^p\oplus E_D^p\cong E^p$, from the above, $CcC_N^p$ is the orthogonal complement of the exact forms inside the closed forms,
    \begin{equation}
        C^p\cong CcC_N^p\oplus E^p,
    \end{equation}
    and we have $H_\mathrm{dR}^p(M;\mathbb{R})\cong CcC_N^p\cong H^p(M;\mathbb{R})$. See for example~\cite{gilkey1984invariance}. Note also that the relative $D$ boundary condition applied to $CcC^p$ gives the relative cohomology, $CcC_D^p\cong H^p(M,\partial M;\mathbb{R})$, cf.~\cite{gilkey1984invariance}.

    Recall that when the boundary is empty, $CcC^p\cong\ker(\Delta|_{\Omega^p(M)})$ by Poincar\'e duality and so, $\Omega^p(M)\cong CcC^p\oplus\Delta\Omega^p(M)$. Thus, the spaces of harmonic \emph{fields}\footnote{To use the nomenclature introduce by Kodaira in~\cite{Kodaira:1949}.}, $CcC^p$, and harmonic \emph{forms}, $\ker(\Delta|_{\Omega^p(M)})$, are isomorphic. However, when the boundary is non-empty there may be more harmonic \emph{forms} than harmonic \emph{fields} and $\Delta\Omega^p(M)\cong\Omega^p(M)$. See, e.g.~\cite{Schwarz1995}. Indeed, while $CcC_N^p$ and $CcC_D^p$ are finite dimensional, $CcC^p$ and $\ker(\Delta|_{\Omega^p(M)})$ are infinite dimensional for $0<p$.

    The deformation retract data 
    \begin{equation}
        \begin{tikzcd}
            (\Omega^{\bullet}(M, \LV),\mathrm d_M)\ar[loop left, "h"]\rar[shift left, "p_1"] & (CcC_N^\bullet,0) \lar[shift left, "i_1"]
        \end{tikzcd}
    \end{equation}
    is given by the trivial embedding $i_1$ and canonical projection $p_1$ given by the Hodge decomposition.  

    We define the contracting homotopy by
    \begin{equation}
        h\coloneqq\mathrm d_M^{-1}P_E,
    \end{equation}
    where $P_E$ is the projector onto the image of $\mathrm d_M$ and $\mathrm d_M^{-1}$ is the inverse of the operator $\mathrm d_M$ restricted to the  orthogonal complement of its kernel, cf.~the definition of the Chern--Simons Green function in~\cite{Axelrod:1991vq}. The Hodge decomposition~\eqref{HD} then implies
    \begin{equation}
        \mathrm{id}-h\circ\mathrm d_M-\mathrm d_M\circ h=P_{CcC_N^\bullet},
    \end{equation}
    where $P_{CcC_N^\bullet}$ is the projector onto $CcC_N^\bullet\subseteq\Omega^\bullet(M)$. Explicitly, using~\eqref{HD} any $p$-form can be written 
    \begin{equation}
        \omega=\mathrm d_M\alpha+\mathrm d_M^\dagger\beta+\mathrm d_M\gamma+\theta,
    \end{equation}
    where $\mathrm d_M\alpha\in E^p_D$, $\mathrm d_M^\dagger\beta\in cE^p_N$, $\gamma\in cC^{p-1}$, and $\theta\in CcC^p_N$. Consequently,
    \begin{equation}
        (\mathrm d_M\circ h)(\omega)=\mathrm d_M\alpha+\mathrm d_M\gamma
        \quad\mbox{and}\quad
        (h\circ\mathrm d_M)(\omega)=\mathrm d_M^\dagger\beta
    \end{equation}
    and
    \begin{equation}
        (\mathrm{id}-h\circ\mathrm d_M-\mathrm d_M\circ h)(\omega)=\theta,
    \end{equation}
    as required. 

    The relative minimal model is  given by 
    \begin{equation}
        \begin{tikzcd}
            {CcC^0(M,\LV)}\rar["0"]\dar["\overset\circ\pi_1"] & {CcC_N^1(M,\LV)}\rar["0"]\dar["\overset\circ\pi_1"] & {CcC_N^2(M,\LV)}\rar["0"]\dar["\overset\circ\pi_1"] & {CcC_N^3(M,\LV)}
            \\
            {C^0(\partial M,\LV)}\rar["0"] & {CcC^1(\partial M,\LV)}\rar["0"] & {CcC^2(\partial M,\LV)}
        \end{tikzcd},
    \end{equation}
    where $\overset\circ\pi_1(-)=-|_{\partial M}$ which is uniquely determined by the tangential component $\omega^{\|}$ (although they strictly speaking belong to different spaces) and so for all $\omega\in CcC^p_N$ we may formally identify $\omega|_{p\in\partial M}=\omega^{\|}=\omega|_{\partial M}$. The higher products follow from the homological perturbation lemma as discussed in \Cref{ssec:minimal_model} and give a perturbative expansion of the classical solutions given specified boundary data. 

    \subsubsection{Yang--Mills theory}

    Finally, let us discuss Yang--Mills theory in four dimensions, including a topological $\theta$-term, in the framework of relative $L_\infty$-algebras. We work with the usual second-order formulation.\footnote{For first-order formulations see e.g.~\cite{Cattaneo_2014,Jurco:2018sby,Macrelli:2019afx}.}

    \paragraph{\mathversion{bold}Relative $L_\infty$-algebra.}
    We take the colour Lie algebra of the theory to be an ordinary finite-dimensional metric Lie algebra $(\LV,[-,-]_\LV,\langle-,-\rangle_\LV)$. Correspondingly, the relative $L_\infty$-algebra of Yang--Mills theory on an oriented compact four-dimensional Riemannian manifold $(M,g)$ with boundary $\partial M$ is given by
    \begin{subequations}
        \begin{equation}
            \begin{tikzcd}
                \overbrace{\Omega^0(M,\LV)}^{\in\,c}\rar["\mu_1"]\dar["\pi_1"] & \overbrace{\Omega^1(M,\LV)}^{\in\,A}\rar["\mu_1"]\dar["\pi_1"] & \overbrace{\Omega^3(M,\LV)}^{\in\,A^+}\rar["\mu_1"]\dar["\pi_1"] & \overbrace{\Omega^4(M,\LV)}^{\in\,c^+}
                \\
                \underbrace{\Omega^0(\partial M,\LV)}_{\in\,\gamma}\rar["\mu_1^\partial"] & \underbrace{\big(\Omega^1(\partial M,\LV)\oplus\Omega^2(\partial M,\LV)\big)}_{\in\,(\alpha,\beta)}\rar["\mu_1^\partial"] & \underbrace{\Omega^3(\partial M,\LV)}_{\in\,\alpha^+}.
            \end{tikzcd}
        \end{equation}
        where the first row represents $\LV$ and the second row $\LV_\partial$. Notice that there is the extra component $\Omega^2(\partial M,\mathfrak{g})$ in the boundary $L_\infty$-algebra $\LV_\partial$ labelled by $\beta$). The reason for this is that only the gauge potential $A\in\Omega^1(M,\LV)$ appears with second-order terms in the Yang--Mills action. Furthermore, upon letting `$*$' be the Hodge operator with respect to the metric $g$, we have (see e.g.~\cite{Zeitlin:2007yf,Zeitlin:2008cc,Jurco:2018sby})
        \begin{equation}
            \begin{gathered}
                \mu_1(c)\coloneqq\mathrm d_Mc,
                \quad
                \mu_1(A)\coloneqq\mathrm d_M{*\mathrm d_MA},
                \quad
                \mu_1(A^+)\coloneqq\mathrm d_MA^+,
                \\
                \mu_2(c,c')\coloneqq[c,c']_\LV,
                \quad
                \mu_2(c,A)\coloneqq[c,A]_\LV,
                \quad
                \mu_2(c,A^+)\coloneqq[c,A^+]_\LV,
                \quad
                \mu_2(c,c'^+)\coloneqq[c,c'^+]_\LV,
                \\
                \mu_2(A,A')\coloneqq\mathrm d_M{*[A,A']_\LV}+[A,{*\mathrm d_MA'}]_\LV+[A',{*\mathrm d_MA}]_\LV,
                \quad
                \mu_2(A,A'^+)\coloneqq[A,A'^+]_\LV,
                \\
                \mu_3(A,A',A'')\coloneqq[A,{*[A',A'']_\LV}]_\LV+\text{cyclic}
            \end{gathered}
        \end{equation}
        as well as 
        \begin{equation}
            \begin{gathered}
                \mu_1^\partial(\gamma)\coloneqq
                \begin{pmatrix}
                    \mathrm d_{\partial M}\gamma\\ 0
                \end{pmatrix},
                \quad
                \mu_1^\partial
                \begin{pmatrix}
                    \alpha\\ \beta
                \end{pmatrix}
                \coloneqq\mathrm d_{\partial M}\beta,
                \\
                \mu_2^\partial(\gamma,\gamma')\coloneqq[\gamma,\gamma']_\LV,
                \quad
                \mu_2^\partial
                \left(
                    \gamma,
                    \begin{pmatrix}
                        \alpha\\ \beta
                    \end{pmatrix}
                \right)
                \coloneqq
                \begin{pmatrix}
                    [\gamma,\alpha]_\LV\\ [\gamma,\beta]_\LV
                \end{pmatrix},
                \quad
                \mu_2^\partial(\gamma,\alpha^+)\coloneqq[\gamma,\alpha^+]_\LV,
                \\
                \mu_2^\partial
                \left(
                    \begin{pmatrix}
                        \alpha\\ \beta
                    \end{pmatrix},
                    \begin{pmatrix}
                        \alpha'\\ \beta'
                    \end{pmatrix}
                \right)
                \coloneqq[\alpha,\beta',]_\LV+[\alpha',\beta]_\LV
            \end{gathered}
        \end{equation}
        and
        \begin{equation}
            \begin{gathered}
                \pi_1(c)\coloneqq c|_{\partial M}, 
                \quad
                \pi_1(A)\coloneqq
                \left.
                    \begin{pmatrix}
                        A\\ *\mathrm d_MA+\theta\mathrm d_MA
                    \end{pmatrix}
                \right|_{\partial M},
                \quad
                \pi_1(A^+)\coloneqq A^+|_{\partial M},
                \\
                \pi_2(A,A')\coloneqq
                \left.
                    \begin{pmatrix}
                        0\\ {*[A,A']_\LV}+\theta[A,A']_\LV
                    \end{pmatrix}
                \right|_{\partial M}.
            \end{gathered}
        \end{equation}
        In addition, we introduce the bilinear forms that have the non-vanishing components
        \begin{equation}
            \begin{gathered}
                \langle c,c'^+\rangle\coloneqq\int_M\langle c,c'^+\rangle_\LV\eqqcolon\langle c'^+,c\rangle_\LV,
                \quad
                \langle A,A'^+\rangle\coloneqq\int_M\langle A,A'^+\rangle_\LV\eqqcolon\langle A'^+,A\rangle_\LV,
                \\
                \langle\gamma,\alpha^+\rangle_{\partial}\coloneqq\int_{\partial M}\langle\gamma,\alpha^+\rangle_\LV,
                \quad
                \left\langle
                    \begin{pmatrix}
                        \alpha\\ \beta
                    \end{pmatrix},
                    \begin{pmatrix}
                        \alpha'\\ \beta'
                    \end{pmatrix}
                \right\rangle_{\partial}
                \coloneqq\int_{\partial M}\langle\alpha,\beta'\rangle_\LV.
            \end{gathered}
        \end{equation}
    \end{subequations}
    Then, the relative homotopy Maurer--Cartan action~\eqref{eq:relative_MC_action} becomes
    \begin{equation}
        \begin{aligned}
            S&=\frac12\int_M\Big\{\langle A,\mathrm d_M{*\mathrm d_MA}\rangle_\LV+\langle A^+,\mathrm d_Mc\rangle_\LV-\langle c,\mathrm d_MA^+\rangle_\LV\Big\}
            \\
            &\kern1cm+\int_M\Big\{\tfrac1{3!}\langle A,\mathrm d_M{*[A,A]_\LV}+2[A,{*\mathrm d_MA}]_\LV\rangle_\LV+\langle A^+,[A,c]_\LV\rangle_\LV+\tfrac12\langle c^+,[c,c]_\LV\rangle_\LV\Big\}
            \\
            &\kern1.5cm+\frac18\int_M\langle A,[A,{*[A,A]_\LV}]_\LV\rangle_\LV
            \\
            &\kern2cm+\frac12\int_{\partial M}\Big\{\langle c,A^+\rangle_\LV+\langle A,{*\mathrm d_MA}+\theta\mathrm d_MA\rangle_\LV\Big\}\Big|_{\partial M}
            \\
            &\kern2.5cm+\frac1{3!}\int_{\partial M}\langle A,{*[A,A]}_\LV+\theta[A,A]_\LV\rangle_\LV\Big|_{\partial M}
            \\
            &=\int_M\Big\{\tfrac12\langle F,{*F}\rangle_\LV+\langle\nabla_MA^+,c\rangle_\LV+\tfrac12\langle c^+,[c,c]_\LV\rangle_\LV+\tfrac{\theta}{2}\langle F,F\rangle_\LV\Big\},
        \end{aligned}
    \end{equation}
    where, as before, $F\coloneqq\mathrm d_MA+\frac12[A,A]_\LV$ and $\nabla_Mc\coloneqq\mathrm d_Mc+[A,c]_\LV$.

    \addcontentsline{toc}{section}{Acknowledgements}
    \section*{Acknowledgements}

    We thank Branislav Jur{\v c}o and Christian Saemann for fruitful discussions.

    \addcontentsline{toc}{section}{Declarations}
    \section*{Declarations}

    \textbf{Funding.}
    L.~A., H.~K., and C.~A.~S.~Y.~gratefully acknowledge the financial support of the Leverhulme Trust, Research Project Grant number \textsc{rpg}-2021-092.\\[5pt]
    \textbf{Conflict of interest.}
    The authors have no relevant financial or non-financial interests to disclose.\\[5pt]
    \textbf{Data statement.}
    No additional research data beyond the data presented and cited in this work are needed to validate the research findings in this work.\\[5pt]
    \textbf{Licence statement.}
    For the purpose of open access, the authors have applied a Creative Commons Attribution (CC-BY) licence to any author-accepted manuscript version arising.
    
    \addcontentsline{toc}{section}{References}
    \bibliographystyle{unsrturl}
    \bibliography{bigone,extra}

\end{document}